  \providecommand\BibTeX{{%
    \normalfont B\kern-0.5em{\scshape i\kern-0.25em b}\kern-0.8em\TeX}}}
\newtheorem{definition}{Definition}
\DeclareMathOperator*{\argmax}{argmax}
\newcommand{\E}{\mathbb{E}}
\newcommand{\I}{\mathbb{I}}
\newcommand{\R}{\mathbb{R}}
\newcommand{\cA}{\mathcal{A}}
\newcommand{\cC}{\mathcal{C}}
\newcommand{\cF}{\mathcal{F}}
\newcommand{\cL}{\mathcal{L}}
\newcommand{\cP}{\mathcal{P}}
\newcommand{\cR}{\mathcal{R}}
\newcommand{\cU}{\mathcal{U}}
\newcommand{\ap}{{\rm ap}}
\newcommand{\OPT}{{\rm OPT}}
\newcommand{\EFF}{{\rm EFF}}
\newcommand{\ERP}{{\rm ERP}}
\newcommand{\ERPN}{{\rm ERPV}}
\newcommand{\EFD}{{\rm EFD}}
\newcommand{\AAFF}{{\sf AA-FF}}
\newcommand{\AAIMMNaive}{{\sf AA-IMM-Naive}}
\newcommand{\AAIMMFB}{{\sf AA-IMM-FB}}
\newcommand{\AAIMMDAG}{{\sf AA-IMM-DAG}}
\newcommand{\NVRRP}{{\sf Naive-VRR-Path}}
\newcommand{\FBVRRP}{{\sf FB-VRR-Path}}
\newcommand{\DAGVRRP}{{\sf DAG-VRR-Path}}
\newcommand{\NodeEdgeSelection}{{\sf NodeEdgeSelection}}
\newcommand{\AAIMM}{\mbox{\sf AAIMM}}
\newcommand{\alg}[1]{\textnormal{\textsf{#1}}}
\newcommand{\newalg}[1]{{#1}}
\newcommand{\compilehidecomments}{true}
	\newcommand{\ruibin}[1]{}
	\newcommand{\wei}[1]{}
	\newcommand{\ruibin}[1]{{\color{red} [\text{Ruibin:} #1]}}
	\newcommand{\wei}[1]{{\color{blue} [\text{Wei:} #1]}}
\newcommand{\compilefullversion}{true}
	\newcommand{\OnlyInFull}[1]{}
	\newcommand{\OnlyInShort}[1]{#1}
	\newcommand{\OnlyInFull}[1]{#1}%
	\newcommand{\OnlyInShort}[1]{}%
\begin{document}

\title{Scalable Adversarial Attack Algorithms on Influence Maximization}




\author{Lichao Sun}
\affiliation{%
 \institution{Lehigh University}
 \city{Bethlehem}
 \state{PA}
 \country{USA}}
\email{lis221@lehigh.edu}

\author{Xiaobin Rui}
\authornote{Corresponding author}
\affiliation{%
  \institution{China University of Mining and Technology}
  \city{Xuzhou}
  \state{Jiangsu}
  \country{China}
}
\email{ruixiaobin@cumt.edu.cn}

\author{Wei Chen}
\affiliation{%
  \institution{Microsoft Research Asia}
  \city{Beijing}
  \country{China}}
\email{weic@microsoft.com}






\begin{abstract}
  In this paper, we study the adversarial attacks on influence maximization 
  under dynamic influence propagation models in social networks.
  In particular, given a known seed set $S$, the problem is to minimize the influence spread from $S$ by deleting a limited number of nodes and edges.
  This problem reflects many application scenarios, such as blocking virus (e.g. COVID-19) propagation in social networks by quarantine and vaccination, blocking rumor spread by freezing fake accounts, or attacking competitor's influence by incentivizing some users to ignore the information from the competitor. 
  In this paper, under the linear threshold model, we adapt the reverse influence sampling approach and provide efficient algorithms of sampling valid reverse reachable paths to solve the problem.
  We present three different design choices on reverse sampling, which all guarantee $1/2 - \varepsilon$ approximation (for any small $\varepsilon >0$) and an efficient running time.
\end{abstract}

\begin{CCSXML}
    <ccs2012>
    <concept>
    <concept_id>10002951.10003260.10003272.10003276</concept_id>
    <concept_desc>Information systems~Social advertising</concept_desc>
    <concept_significance>300</concept_significance>
    </concept>
    <concept>
    <concept_id>10002951.10003260.10003282.10003292</concept_id>
    <concept_desc>Information systems~Social networks</concept_desc>
    <concept_significance>300</concept_significance>
    </concept>
    <concept>
    <concept_id>10003752.10003753.10003757</concept_id>
    <concept_desc>Theory of computation~Probabilistic computation</concept_desc>
    <concept_significance>300</concept_significance>
    </concept>
    <concept>
    <concept_id>10003752.10003809.10003716.10011141.10010040</concept_id>
    <concept_desc>Theory of computation~Submodular optimization and polymatroids</concept_desc>
    <concept_significance>300</concept_significance>
    </concept>
    </ccs2012>
\end{CCSXML}

\ccsdesc[300]{Information systems~Social advertising}
\ccsdesc[300]{Information systems~Social networks}
\ccsdesc[300]{Theory of computation~Probabilistic computation}
\ccsdesc[300]{Theory of computation~Submodular optimization and polymatroids}

\keywords{influence maximization, triggering model, greedy algorithm}


\maketitle

\section{Introduction}
Influence maximization (IM) is the optimization problem of finding a small set of most influential nodes in a social network that generates the largest influence spread,
which has many applications such as promoting products or brands through viral marketing in social networks~\cite{domingos01,richardson02,kempe03}.
However, in real life, there are so many competitions in various scenarios with different purposes, such as attacking competitor's influence ~\cite{Leskovec07}, controlling rumor~\cite{BAA11,HeSCJ12}, or blocking the virus spread.
The adversary will block the virus propagation by quarantine and vaccination, block rumor spread by freezing fake accounts, or attack competitors' influence by incentivizing some users to ignore the information from the competitor.
All these scenarios can be modeled as the adversary trying to remove certain nodes and edges 
to minimize the influence or impact from the competitor's seed set in social networks, which we denoted as 
the adversarial attacks on influence maximization (AdvIM).

We model the AdvIM task more formally as follows.
The social influence network is modeled as a weighted network $G=(V,E,w)$, where $V$ is a set of nodes representing individuals, $E$ is a set of directed edges representing influence relationships, and $w$ is influence weights on edges.
In the beginning, we have a fixed seed set $S$, and the propagation from $S$ follows the classical linear threshold model~\cite{kempe03}.
For an attack set $A$ consisting of a mix of nodes (disjoint from $S$) and edges, we measure the effectiveness of $A$ by the {\em influence reduction} $\rho_S(A)$ it achieves, which is defined as the difference in influence spread with and without removing nodes and edges in the attack set $A$.
Then, given a node budget $q_N$ and an edge budget $q_E$, the AdvIM task is to find an attack set $A$ with at most $q_N$ nodes (excluding any seed node) and at most $q_E$ edges, 
such that after removing the nodes and edges in $A$, the influence spread of $S$ is minimized, or the {\em influence reduction} $\rho_S(A)$ is maximized.
%
%

We show that under the LT model, the influence reduction function $\rho_S(A)$ is monotone and submodular, which enables a greedy approximation algorithm.
However, the direct greedy algorithm is not efficient since it requires a large number of simulations of propagation from the seed set $S$.
In this paper, we adapt the reverse influence sampling (RIS) approach to design efficient algorithms for the AdvIM task.
Due to the nature of the problem, only successful propagation from the seed set can be potentially reduced by the attack set.
This creates a new challenge for reverse sampling.
In this paper, we present three different design choices for such reverse sampling and their theoretical analysis.
They all provide a $1/2-\varepsilon$ approximation guarantee and have different trade-offs in efficiency.
We then conduct experimental evaluations on several real-world networks and demonstrate that our algorithms achieve good influence reduction results while running much faster than existing greedy-based algorithms.
To summarize, \textit{our contributions} are:
(a) proposing the study of adversarial attacks on influence maximization problems;
(b) designing efficient algorithms by adapting the RIS approach and providing their theoretical guarantees; and
(c) conducting experiments on real-world networks to demonstrate the effectiveness and efficiency of 
our proposed algorithms.
\paragraph{Related Work.}
Domingos and Richardson first studied Influence Maximization (IM)~\cite{domingos01,richardson02}, and then IM is mathematically formulated as a discrete optimization problem by Kempe et al. \citep{kempe03},
who also formulate the independent cascade model, the linear threshold model, the triggering model, and provide a greedy approximation algorithm based on submodularity.
After that, most works focus on improving the efficiency and scalability of influence maximization algorithms~\cite{ChenWY09,ChenYZ10,WCW12,JungHC12,BorgsBrautbarChayesLucier,tang14,tang15}.
The most recent and the state of the art is the reverse influence sampling (RIS) approach~\cite{BorgsBrautbarChayesLucier,tang14,tang15,Nguyen_DSSA_2016,Tang_OPIM_2018}, and the \alg{IMM} algorithm of~\cite{tang15} is one of the representative algorithms for the RIS approach.
Some other studies look into different problems, such as competitive and complementary influence maximization~\cite{kempe07,BAA11,chen2011influence,HeSCJ12,lu2015competition,he2016joint},
adoption maximization~\cite{bhagat_2012_maximizing}, 
robust influence maximization~\cite{ChenLTZZ16,HeKempe16}, etc.
The most similar topic to this work is competitive influence maximization that aims to maximize the influence while more than one party is in the network. 

One similar work also wants to stop the influence spread in the social network \cite{khalil2014scalable}. 
However, this work has a different objective function that estimates the total influence by summing up the influence of the individual node in seed sets, which is different from the traditional influence maximization. 
They proposed the greedy approach through forward-tree simulation without time-complexity analysis. In this work, our objective function directly matches the  original influence maximization objective function.
Moreover, we propose RIS-based algorithms to overcome the efficiency issue in the forward simulation approach while also providing theoretical guarantee on both the approximation ratio and the running time.

\OnlyInShort{The full version of the paper with complete proofs and other technical details is available at ~\cite{arxivfull}.}
\section{Model and Problem Definition} \label{sec:model}


\paragraph{Adversarial Attacks on Diffusion Model.}
In this paper, we focus on the well-studied {\em linear threshold (LT) model} \cite{kempe03} as the basic diffusion model. 
A social network under the LT model is modeled as a directed influence graph $G=(V,E, w)$, where $V$ is a finite set of vertices or nodes, $E \subseteq V \times V$ is the set of directed edges connecting pairs of nodes, and	$w: E \rightarrow [0,1]$ gives the influence weights on all edges.
The diffusion of information or influence proceeds in discrete time steps $t = 0, 1, 2, \dots$. 
At time $t=0$, the {\em seed set} $S_0$ is selected to be active, and also each node $v$ independently selects a threshold $\theta_v$ uniformly at random in the range $[0, 1]$, corresponding to users' true thresholds. 
At each time $t\ge 1$, an inactive node $v$ becomes active if $\sum_{u:u\in S_{t-1},(u,v)\in E}w(u,v) \ge \theta_v$ where $S_{t-1}$ is the set of nodes activated by time $t-1$. 
The diffusion process ends when there are no more nodes activated in a time step.

Given the weights of all nodes $v\in V$, we can construct the live-edge graph $L = (V, E(L))$, where at most one of each $v$'s incoming edges is selected with probability $w(u, v)$, and no edge is selected with probability $1 - \sum_{u:(u,v)\in E} w(u, v)$.
Each edge $(u,v)\in L$ is called a {\em live edge}.
Kempe et al. \cite{kempe03} show that the propagation in the linear threshold model is equivalent to the deterministic propagation via bread-first traversal in a random live-edge graph $L$.
An important metric in a diffusion model is the {\em influence spread}, defined as the expected number of active nodes when the propagation from the given seed set $S_0$ ends, and is denoted as $\sigma(S_0)$.
Let $\Gamma(G, S)$ denote the set of nodes in graph $G$ that can be reached from the node set $S$.
Then, by the above equivalent live-edge graph model, we have $\sigma(S_0) = \E_L[|\Gamma(L, S_0)|] = \sum_L Pr[L|G] \cdot |\Gamma(L, S_0)|$, where the expectation is taken over the distribution of live-edge graphs, and  $Pr[L|G]$ is the probability of sampling a particular live-edge graph $L$ in graph $G$.
As defined above, we have:
\[
p(v, L, G)= 
\begin{cases}
w(u,v),& \text{if } \exists	u:(u,v) \in L\\
1 - \sum_{u:(u,v)\in E} w(u, v),              & \text{otherwise}
\end{cases}
\]
which is the probability of the configuration of incoming edges for node $v$ in $L$; then, the probability of a particular live-edge graph $L$ is $Pr[L|G] = \prod_{v\in V}p(v, L, G)$.
When we need to specify the graph, we use $\sigma(S_0, G)$ to represent the influence spread under graph $G$.

A set function $f:V\rightarrow \R$ is called {\em submodular} if for all $S\subseteq T \subseteq V$ and $u \in V\setminus T$, 
$f(S\cup \{u\}) - f(S) \ge f(T \cup \{u\}) - f(T)$.
Intuitively, submodularity characterizes the diminishing return property often occurring in economics and operation research.
Moreover, a set function $f$ is called {\em monotone} if for all $S\subseteq T \subseteq V$, $f(S) \le f(T)$.
It is shown in~\cite{kempe03} that influence spread $\sigma$ for the linear threshold model is a monotone submodular function.
A non-negative monotone submodular function allows a greedy solution to its maximization problem subject to a cardinality constraint, with an approximation ratio $1-1/e$, where $e$ is the base of the natural logarithm~\cite{NWF78}.
This is the technical foundation for most influence maximization tasks.



\paragraph{Adversarial Attacks on Influence Maximization.}
The classical influence maximization problem is to choose a seed set $S$ of size at most $k$ seeds to maximize the influence spread $\sigma(S, G)$.
For the Adversarial Attacks Influence Maximization (AdvIM) problem, the goal is to select at most $q_N$ nodes and $q_E$ edges to be removed, such that the influence spread on a given seed set $S$ is minimized.
Let $A$ denote a joint {\em attack set}, which contains a subset of nodes $A_N \subseteq V \setminus S$ and a subset of edges $A_E\subseteq E$, i.e. $A = A_N \cup A_E$.
Denote the new graph after removing the nodes and edges in $A$ as $G' = G \setminus A$.
We first define the key concept of influence reduction under the attack set $A$.
\begin{definition}[Influence Reduction]
Given a seed set $S$, the influence reduction under the attack set $A$, denoted as $\rho_S(A)$, is the reduction in influence spread from the original graph $G$	to the new graph $G' = G\setminus A$. 
That is, $\rho_S(A) = \sigma(S,G) - \sigma(S, G')$.
\end{definition}
Note that $S \cap A = \emptyset$, which means we cannot attack any seed node.
We can now define the main optimization task in this paper.
\begin{definition}
    The Adversarial Attacks on Influence Maximization (AdvIM) under the linear threshold model is the optimization task where the input includes the directed influence graph $G=(V,E,w)$, seed set $S$, attack node budget $q_N$ and attack edge budget $q_E$.
    The goal is to find an attack set $A$ to remove, which contains at most $q_N$ nodes (excluding the seed set $S$) and $q_E$ edges, such that the total influence reduction is maximized: $A^* = \argmax_{A: |A_N|\leq q_N, |A_E|\leq q_E} \rho_S(A)$.
\end{definition}




Before the algorithm design, we first establish the important fact that $\rho_S(A)$ as a set function is monotone and submodular.
\begin{restatable}{lemma}{lemmaone} \label{lem:lemma1} 
Influence reduction $\rho_S(A)$ for the LT model satisfies monotonicity 
and submodularity. %
\end{restatable}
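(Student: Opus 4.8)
The plan is to pass to the live-edge (snapshot) representation of the LT model and then reduce the claim to the fact that a set-cover-type function is monotone and submodular. First I would set up a \emph{single} coupling of all the random live-edge graphs. For each node $v$ I partition $[0,1]$ into disjoint sub-intervals, one of length $w(u,v)$ for every incoming edge $(u,v)\in E$, with the remaining mass reserved for the ``no incoming edge'' outcome, and I draw one uniform $r_v$; reading off, for each $v$, the edge whose interval contains $r_v$ (when present) yields a live-edge graph $L$ of $G$. The key observation is that, because the LT model assigns each edge $(u,v)$ the \emph{absolute} probability $w(u,v)$ and dumps all leftover mass onto the no-edge outcome (rather than renormalizing among the surviving edges), deleting the nodes and edges of an attack set $A$ from this fixed $L$ yields a graph $L\setminus A$ whose law is \emph{exactly} that of a live-edge graph of $G\setminus A$: every surviving edge still appears with its own weight, and the deleted weight simply increases the no-edge probability. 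Hence under this one coupling $\sigma(S,G)=\E_L[|\Gamma(L,S)|]$ and $\sigma(S,G\setminus A)=\E_L[|\Gamma(L\setminus A,S)|]$ hold simultaneously for all $A$, so
\[
\rho_S(A)=\E_L\big[\,|\Gamma(L,S)|-|\Gamma(L\setminus A,S)|\,\big].
\]
Since a nonnegative combination (here, an expectation) of monotone submodular functions is monotone submodular, it suffices to prove that $f_L(A):=|\Gamma(L,S)|-|\Gamma(L\setminus A,S)|$ is monotone and submodular for every fixed $L$.

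Next I would exploit the defining structure of a live-edge graph: every node has at most one incoming live edge, so the reachable set $\Gamma(L,S)$, together with the chosen live edges, forms an in-forest $F$ rooted at the seeds, in which every reachable non-seed has a unique parent and hence a unique path back to $S$. A node $w\in\Gamma(L,S)$ becomes unreachable in $L\setminus A$ precisely when its unique $S$-to-$w$ path in $F$ is severed, i.e.\ when that path passes through some deleted node of $A_N$ or uses some deleted edge of $A_E$. Writing $D_y$ for the set of reachable descendants lying below an attack element $y$ in $F$ (and $D_y=\emptyset$ for any $y$ that is not a live tree element, or whose subtree is already unreachable), the blocked set is exactly $f_L(A)=\big|\bigcup_{y\in A} D_y\big|$. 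This is a coverage function of $A$ with fixed ground sets $\{D_y\}$, which is immediately monotone (enlarging $A$ enlarges the union) and submodular (the marginal number of newly blocked nodes only shrinks as $A$ grows). Taking the expectation over $L$ then transfers monotonicity and submodularity to $\rho_S$.

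The step I expect to be the main obstacle is justifying the coupling rigorously---namely the distributional identity that $L\setminus A$ is a live-edge graph of $G\setminus A$---since this is exactly where the special no-renormalization property of the LT live-edge process is used, and where a careless coupling (e.g.\ renormalizing the surviving edge weights) would fail. A secondary point requiring care is treating node deletions and edge deletions uniformly in the coverage argument: deleting a node must sever every path through it, removing both its subtree and its role as a relay, which the fixed-interval construction guarantees because removing one element never reroutes any other node's live edge---each $r_v$ is fixed, so a node either keeps its live edge or loses it to the no-edge outcome, but never silently switches to a different surviving edge.
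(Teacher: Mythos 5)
Your proof is correct, and at its core it follows the same route as the paper's: pass to the live-edge representation, exploit the fact that in an LT live-edge graph every node has at most one incoming live edge (so every reachable node has a unique path back to $S$), establish the property for each fixed live-edge graph, and take expectations. The difference lies in the finishing step and in what is made explicit. The paper proves per-graph submodularity by a pointwise marginal-containment argument: for fixed $L$ and $A_1\subseteq A_2$, any node newly blocked by adding $x$ to $A_2$ is shown, via the uniqueness of its path from $S$, to be newly blocked by adding $x$ to $A_1$ as well. You instead package the per-graph function as a coverage function $f_L(A)=\bigl|\bigcup_{y\in A}D_y\bigr|$ over fixed subtree sets $D_y$ of the reachability forest, which yields monotonicity and submodularity in one stroke (the paper dismisses monotonicity as ``straightforward'' and argues only submodularity). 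You are also more careful than the paper on a point it silently assumes: the identity $\rho_S(A)=\E_L\bigl[\,|\Gamma(L,S)|-|\Gamma(L\setminus A,S)|\,\bigr]$ requires the coupling under which deleting $A$ from a live-edge graph of $G$ produces exactly a live-edge graph of $G\setminus A$, which holds precisely because LT assigns absolute edge probabilities with no renormalization; the paper writes this equivalence without comment, whereas you identify it as the crux and justify it with the interval construction. One small wording fix: for a deleted node $y$, the set $D_y$ must include $y$ itself, since a removed reachable node counts toward the reduction, so ``descendants lying below $y$'' should be read inclusively; with that reading your identity $f_L(A)=\bigl|\bigcup_{y\in A}D_y\bigr|$ is exact.
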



\begin{proof}
The proof is based on the live-edge graph representation of the LT model.
The proof for the monotonicity property is straightforward, so we focus on the submodularity property.
Let $A_1$ and $A_2$ be two attack sets with $A_1 \subseteq A_2$, and $x \in ((V\setminus S) \cup E) \setminus A_2$.
According to the definition of the influence reduction, we have
\begin{align*}
    &\rho_S(A_1 \cup \{x\}) -\rho_S(A_1) \geq \rho_S(A_2 \cup \{x\}) -\rho_S(A_2) \\
    \Leftrightarrow &  \E_{L}[|\Gamma(L, S) \setminus \Gamma(L\setminus A_1 \setminus \{x\}, S)|] - \E_{L}[|\Gamma(L, S) \setminus \Gamma(L\setminus A_1, S)|] \\
    \geq &  \E_{L}[|\Gamma(L, S) \setminus  \Gamma(L\setminus A_2 \setminus \{x\}, S)|] - \E_{L}[|\Gamma(L, S) \setminus \Gamma(L\setminus A_2, S)|].
\end{align*}
Then it is sufficient to show that for any fixed live-edge graph $L$, for every node $v\in V\setminus S$, if $v \in \Gamma(L, S) \setminus  \Gamma(L\setminus A_2 \setminus \{x\}, S)$ but $v\notin \Gamma(L, S) \setminus \Gamma(L\setminus A_2, S)$, 
then  $v \in \Gamma(L, S) \setminus  \Gamma(L\setminus A_1 \setminus \{x\}, S)$ but $v\notin \Gamma(L, S) \setminus \Gamma(L\setminus A_1, S)$.
Now suppose that $v \in \Gamma(L, S) \setminus  \Gamma(L\setminus A_2 \setminus \{x\}, S)$ but $v\notin \Gamma(L, S) \setminus \Gamma(L\setminus A_2, S)$.
This means that $v$ cannot be reached from $S$ in $L$  after $A_2 \cup \{x\}$ are removed from $L$, 
but $v$ can be reached from $S$ if we only remove $A_2$ from $L$.
Since $A_1 \subseteq A_2$, we have that when we remove $A_1$ from $L$, $v$ can still be reached from $S$, i.e. 
$v\notin \Gamma(L, S) \setminus \Gamma(L\setminus A_1, S)$.
Since after further removing $x$, $v$ cannot be reached from $S$ after we already remove $A_2$,  there is at least one path $P$ from $S$ to $x$ that passes through $x$ but not through any element in $A_2$.
By the live-edge graph construction in the LT model, actually there is at most one path from $S$ to $v$.
Therefore, $P$ is the unique path from $S$ to $v$, and none of the nodes or edges in $A_2$ are on $P$ but $x$ is on $P$. 
Since $A_1 \subseteq A_2$, we know that after removing $A_1$, $P$ still exists but after removing $x$, $P$ no longer exists and there is no path from $S$ to $v$ anymore. 
This means that $v \in \Gamma(L, S) \setminus  \Gamma(L\setminus A_1 \setminus \{x\}, S)$ but $v\notin \Gamma(L, S) \setminus \Gamma(L\setminus A_1, S)$.
	Therefore, the lemma holds.
%
%
\end{proof}
The monotonicity and submodularity provide the theoretical basis for our efficient algorithm, to be presented in the next section. 



\section{Efficient Algorithms for AdvIM}\label{sec:scalablealgo}

The monotonicity and submodularity of the objective function enable the greedy approach for the maximization task. 
In this section, we aim to speed up the greedy approach by adapting the approach of reverse influence sampling (RIS)~\cite{BorgsBrautbarChayesLucier,tang14,tang15}, which provides both theoretical guarantee and efficiency.
We first provide a general adversarial attack algorithm framework $\AAIMM$ based on IMM~\cite{tang15} in Section \ref{sec:overviewIMM} (Algorithm \ref{alg:imm}).
$\AAIMM$ relies on valid reverse reachable (VRR) path sampling. 
Then in Section~\ref{sec:VRRsimulation}, we provide several concrete implementations of VRR path sampling. 

\subsection{Algorithm Framework $\AAIMM$} \label{sec:overviewIMM}

All efficient influence maximization algorithms such as IMM are based on the RIS approach, which generates a suitable number of reverse-reachable sets for influence estimation. 
In our case, we need to adapt the RIS approach for generating reverse-reachable paths in the LT model.
Let $S$ be the fixed seed set.
Let $L$ be a random live-edge graph generated from $G=(V,E,w)$ following the LT model.
Recall that each node selects at most one incoming edge as a live edge in $L$.
Thus, starting from a node $v\in V$, we may find at most one node $u_1$ such that $(u_1,v)\in L$.
Let $u_0 = v$.
In general, starting from $u_i\in V$, we may find at most one node $u_{i+1} \in V$ with $(u_{i+1},u_i) \in L$.
This process stops at some node $u_j$ when one of the following conditions hold:
(a) $u_j$ is a seed node, i.e. $u_j \in S$; 
(b) there is no edge $(u, u_j) \in E(L)$; or
(c) the path loops back, that is, the only edge $(u,u_j)\in E(L)$ satisfies $u \in \{u_0,\ldots, u_{j-1}\}$.
We call this process a reverse simulation from root $v$ in the LT model, and the path obtained from $u_j$ to $u_0$
a {\em reverse-reachable path (RR path) rooted at $v$} 
(under the live-edge graph $L$), denoted as $P_{L,v}$.
Note that if $L$ is a random live-edge graph, then 
$P_{L,v}$ is a random path, with randomness coming from $L$.
When we do not specify the root $v$, we define a {\em reverse-reachable path (RR path)} (under a random live-edge graph $L$) $P_L$ as a random $P_{L,v}$ with $v$ selected uniformly at random from the node set $V\setminus S$.
The reason we exclude the seed set $S$ is that the root $v$ selected in an RR path is to be used as a measure for the influence reduction of attack nodes or edges on the path, and since no seed node is selected in the attack set, no seed node will be counted for influence reduction.
This point will be made clearly and formally in the following lemma.
Sometimes we omit the subscript $L$ in $P_L$ when the context is clear.
Let $VE(P_L)$ be the joint set of nodes and edges of RR path $P_L$ excluding any seed node.

We say that an RR path $P_L$ is a valid RR path or VRR path which contains a seed node in $S$; otherwise $P_L$ is invalid.
Intuitively, for a VRR path $P_L$, the influence of $S$ can reach the root $v$ of $P_L$ through the path, and thus attacking any node or edge on the path would reduce the influence to $v$; but if $P_L$ is invalid, attaching a node or edge on $P_L$ will not reduce the influence to $v$ since anyway $v$ is not influenced by $S$.
For convenience, sometimes we also use the notation of $S$-conditioned RR path $P_L^S$, which is $P_L$ when $P_L$ is valid and $\emptyset$ when $P_L$ is invalid.
Let $\cP^S$ be the probability subspace of VRR paths.
Let $n^-=|V| - |S|$, $\I\{\}$ be the indicator function, and $\sigma^-(S) = \sigma(S) - |S|$.
The following lemma connects the influence reduction of an attack set $A$ with the VRR paths.

\begin{restatable}{lemma}{lemmaimm}\label{lem:VRRpathidentity}
For any given seed set $S$ and attack set $A$,
    \begin{align}
    \rho_S(A) & = n^- \cdot \E_L[\I\{A \cap VE(P_L^S) \ne \emptyset \} ] \nonumber \\
    & = \sigma^-(S) \cdot \E_L[\I\{A \cap VE(P_L) \ne \emptyset \} \mid S \cap VE(P_L) \ne \emptyset].  \nonumber 
    \end{align}
\end{restatable}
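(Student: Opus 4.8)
The plan is to prove the identity node-by-node and then aggregate. Writing $a_v(G)=\Pr_L[v\in\Gamma(L,S)]$ for the probability that node $v$ is activated in a random live-edge graph $L$ of $G$, and using that every seed is always active, I would first decompose $\sigma(S,G)=|S|+\sum_{v\in V\setminus S}a_v(G)$, and likewise $\sigma(S,G\setminus A)=|S|+\sum_{v\in V\setminus S}a_v(G\setminus A)$, where a removed node $v\in A_N$ simply has $a_v(G\setminus A)=0$. Hence $\rho_S(A)=\sum_{v\in V\setminus S}\big(a_v(G)-a_v(G\setminus A)\big)$, and it suffices to show, for each fixed root $v\in V\setminus S$, that $a_v(G)-a_v(G\setminus A)=\Pr_L[P_{L,v}\text{ is valid and }A\cap VE(P_{L,v})\ne\emptyset]$.

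Second, I would invoke the defining feature of the LT live-edge graph that each node has at most one incoming live edge, so the reverse simulation from $v$ produces the \emph{unique} backward path $P_{L,v}$, and $v\in\Gamma(L,S)$ holds if and only if this backward path reaches a seed, i.e.\ $P_{L,v}$ is valid. This immediately gives $a_v(G)=\Pr_L[P_{L,v}\text{ valid}]$, and summing over $v$ yields $n^-\,\Pr_L[P_L\text{ valid}]=\sum_{v\in V\setminus S}a_v(G)=\sigma(S)-|S|=\sigma^-(S)$, which is exactly the normalizing constant needed for the conditional form.

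The crux, and the step I expect to be the main obstacle, is expressing $a_v(G\setminus A)$ in terms of the \emph{same} random object $P_{L,v}$ drawn from $G$; a naive ``delete $A$ from $L$'' is wrong because removing incoming edges of a node changes its probability of selecting no live edge. I would resolve this with a monotone coupling: for each node $u$ order its incoming edges so that those surviving in $G\setminus A$ (not in $A_E$, and not incident to a deleted node in $A_N$) come first, lay their weight-length subintervals of $[0,1]$ in that order, and drive both live-edge graphs with the same uniform threshold $\theta_u$. Under this coupling a node picks the same live edge in $G\setminus A$ as in $G$ whenever that edge survives and picks no edge otherwise; consequently the backward path from $v$ in $G\setminus A$ is exactly the prefix of $P_{L,v}$ truncated at the first edge or node belonging to $A$. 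Therefore $v$ is active in $G\setminus A$ iff $P_{L,v}$ is valid \emph{and} the attack set misses the whole path except its seed endpoint, i.e.\ $A\cap VE(P_{L,v})=\emptyset$ (the seed endpoint is safely excluded from $VE$ because $A_N\cap S=\emptyset$, while edges incident to the seed remain attackable and remain in $VE$). This gives $a_v(G\setminus A)=\Pr_L[P_{L,v}\text{ valid and }A\cap VE(P_{L,v})=\emptyset]$.

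Finally I would combine: subtracting the last expression from $a_v(G)=\Pr_L[P_{L,v}\text{ valid}]$ leaves $\Pr_L[P_{L,v}\text{ valid and }A\cap VE(P_{L,v})\ne\emptyset]$, and summing over the $n^-$ roots $v$ (each chosen with probability $1/n^-$ in the definition of $P_L$) produces the first identity $\rho_S(A)=n^-\,\E_L[\I\{A\cap VE(P_L^S)\ne\emptyset\}]$, using $\I\{A\cap VE(P_L^S)\ne\emptyset\}=\I\{P_L\text{ valid}\}\cdot\I\{A\cap VE(P_L)\ne\emptyset\}$. The second identity then follows by factoring this expectation through the validity event and substituting $n^-\,\Pr_L[P_L\text{ valid}]=\sigma^-(S)$ from the second step, with the conditioning event ``$P_L$ valid'' being the stated condition that $S$ meets the path $P_L$.
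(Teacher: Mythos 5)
Your proof is correct and follows essentially the same route as the paper's: both reduce $\rho_S(A)$ to a per-live-edge-graph comparison of which nodes are reachable from $S$ in $L$ but not in $L\setminus A$, use the unique-in-edge property of LT live-edge graphs to identify this event with ``$P_{L,v}$ is valid and $A$ intersects $VE(P_{L,v})$,'' and then factor through the validity event using $\Pr[S\cap P_L\ne\emptyset]=\sigma^-(S)/n^-$. The only substantive difference is that you explicitly construct the threshold/interval coupling showing that a live-edge graph of $G\setminus A$ is realized by truncating a live-edge graph $L$ of $G$ at the first attacked node or edge -- a step the paper's first equality takes for granted (and the normalizing constant it cites from prior RIS work) -- so your version is somewhat more self-contained, but it is the same argument.
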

\begin{proof}
	By definition, we have
    \begin{align}
    & \rho_S(A) = \E_L [ | \{ v \in V\setminus S \mid v \in \Gamma(L,S) \wedge v\not\in \Gamma(L \setminus A, S)   \}   |  ]  \nonumber  \\
    & = \E_L \left[  n^- \cdot \E_{v \sim \cU(V\setminus S)} [\I\{v \in \Gamma(L,S) \wedge v\not\in \Gamma(L \setminus A, S)   \} ] \right]  \nonumber  \\
    & = n^-  \cdot  \E_{L, v \sim \cU(V\setminus S)}[\I\{A \cap P_L^S(v) \ne \emptyset \} ] \nonumber \\
    & = n^-   \cdot \Pr_{L, v \sim \cU(V\setminus S)}\{ S \cap P_L(v) \ne \emptyset\} \nonumber \\
    & \quad \cdot \E_{L, v \sim \cU(V\setminus S)}[\I\{A \cap P_L(v) \ne \emptyset \} \mid S \cap P_L(v) \ne \emptyset] \nonumber \\
    & = \sigma^-(S) \cdot \E_{L, v \sim \cU(V\setminus S)}[\I\{A \cap P_L(v) \ne \emptyset \} \mid S \cap P_L(v) \ne \emptyset] \label{eq:infidentity} \\
    & = \sigma^-(S) \cdot \E_{P \sim \cP^S} [\I  \{ A \cap P \ne \emptyset \}  ], \label{eq:subspace}
\end{align}
where $\cU(V\setminus S)$ denote the uniform distribution among all nodes in $V\setminus S$.
Eq.~\eqref{eq:infidentity} is due to the original RR set connection with influence spread~\cite{BorgsBrautbarChayesLucier,tang14,tang15}, which shows that $\Pr_{L, v \sim \cU(V\setminus S)}\{ S \cap P_L(v) \ne \emptyset\} = \sigma^-(S) / n^- $.
Eq.~\eqref{eq:subspace} follows from the subspace definition of $\cP^S$.
\end{proof}

The above property implies that we can sample enough RR path from the original space or VRR path from subspace $\cP^S$ to accurately estimate the influence reduction of $A$.
More importantly, by Lemma~\ref{lem:VRRpathidentity} the optimal attack set can be found by seeking the optimal set of nodes and edges that intersect with (a.k.a. cover) the most number of VRR paths, which is a max-cover problem.
Therefore, following the RIS approach, we turn the influence reduction maximization problem into a max-cover problem.
We use the IMM algorithm~\cite{tang15} as the template, but other RIS algorithms follow the similar structure.
Our algorithm $\AAIMM$ contains two phases, estimating the number of VRR paths needed and greedy selection via max-cover, 
as shown in Algorithm \ref{alg:imm}.
The two main parameters $\lambda'$ and $\lambda^*(\ell)$ used in the algorithm are given below:
\begin{align}
& \lambda' \leftarrow (2+\frac{2}{3}\varepsilon') \left(\ln{\left(\binom{n^-}{q_N}\binom{m}{q_E}\right)}+\ell \ln{n^-}+\ln{\log_2{n^-}}\right)\cdot n^-
\label{eq:lambdaprime} \\
& \lambda^*(\ell) \leftarrow 2 n \cdot (1/2 \cdot \alpha + \beta)^2 \cdot \varepsilon^{-2} 
	\label{eq:lambdastar} \\
& 	\alpha \leftarrow \sqrt{\ell \ln{n} + \ln{2}};
\beta \leftarrow \sqrt{1/2 \cdot \left(\ln{\left(\binom{n^-}{q_N}\binom{m}{q_E}\right)}+\alpha^2\right)}. \nonumber
\end{align}

\begin{algorithm}[t] 
	\caption{{\AAIMM}: Adversarial Attacks IMM 
	} \label{alg:imm}
	\KwIn{Graph $G=(V,E,w)$, seed set $S$, budgets $q_N, q_E$, 
		accuracy parameters $(\varepsilon, \ell)$}
	
	\tcp{Phase 1: Estimate $\theta$, the number of VRR paths needed, and generate these VRR paths}
	$\newalg{\cR \leftarrow \emptyset}$;  $LB \leftarrow 1$; $\varepsilon' \leftarrow \sqrt{2}\varepsilon$;
	using binary search to find a $\gamma$ such that
	$\lceil\lambda^*(\ell) \rceil /n^{\ell+\gamma} \le 1/n^\ell$ 
	$\newalg{\ell \leftarrow \ell + \gamma + \ln 2 / \ln{n^-}} $\;
	
	\For{$i = 1 \; to \; \log_2{(\newalg{n} - 1)}$ \label{line:bimfor1}}{
		$x_i \leftarrow n^- /2^i$\; \label{line:bimassignx}
		$\theta_i \leftarrow \lambda' \cdot \varepsilon'^{-2} /x_i$; \tcp{$\lambda'$ is defined in Eq.~\eqref{eq:lambdaprime}} \label{line:bimgenRRset1b} 
		\While{$\newalg{|\cR|}  < \theta_i$}{
			Sample a VRR path $P$ from subspace $\cP^S$, and insert it into $\cR$\;
			\label{line:bimgsample1}
		}\label{line:bimgenRRset1e} 
		\newalg{$A_i \leftarrow \NodeEdgeSelection(\cR, q_N, q_E)$}\; \label{line:bimnodeselect1}
		\If{$n^- \cdot \newalg{F^S_{\cR}(A_i)} \geq (1 + \varepsilon') \cdot x_i$}{
			\label{line:bimestimate1}
			$LB \leftarrow n^- \cdot F^S_{\cR}(A_i)/(1 + \varepsilon')$\; 
			\label{line:bimestimate2}
			{\bf break}\;  \label{line:endcheck}
		} 
	}
	$\theta \leftarrow \lambda^*(\ell) / LB$; \tcp{$\lambda^*(\ell)$ is defined in Eq.~\eqref{eq:lambdastar}}  \label{line:settheta}
	\While{$\newalg{|\cR|} \leq \theta$}{ \label{line:bimcheckLB}
		Sample a VRR path $P$ from subspace $\cP^S$, and insert it into $\cR$\;
		\label{line:bimnodeselect2}		
	}
	\tcp{Phase 2: select attack nodes and edges from the generated VRR paths}
	$A \leftarrow \NodeEdgeSelection(\cR, q_N, q_E)$\;\label{line:bimgsample2}	
	\bf{return} an attack set $A$
\end{algorithm}

In Phase 1, we generate $\theta$ valid VRR paths $\cR$, where $\theta$ is computed to guarantee the approximation with high probability.
In Phase 2, we use the greedy algorithm to find the $q_N$ nodes and $q_E$ edges that cover the most number of VRR paths. 
This greedy algorithm is similar to prior algorithms, and thus we omit it here.
Phase 1 follows the IMM structure to estimate a lower bound of the optimal value, which is used to determine the number of VRR paths needed.
The main difference is that we need to sample valid RR paths, not the simple RR sets as before. 
This part will be discussed separately in the next section.
Moreover, our solution space now is $\binom{n^-}{q_N}\binom{m}{q_E}$, and thus we replace the $\binom{n}{k}$ in the original IMM algorithm.
%
Let $A^*$ be the optimal solution of the AdvIM problem, and $\OPT = \rho_S(A^*)$.

\begin{restatable}{lemma}{sampling} \label{lem:sampling}
	For every $\varepsilon > 0$ and $\ell > 0$, to guarantee the approximation ratio with probability at least
	$1-\frac{1}{n^\ell}$, the number of VRR paths needed by \alg{\AAIMM} is $O\left( \frac{(q_N\log{n^-}+q_E\log{m}+\ell\log{n^-})\cdot \sigma^-(S)}{\OPT \cdot \varepsilon^2} \right)$.
\end{restatable}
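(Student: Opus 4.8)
The plan is to follow the sample-complexity analysis of IMM~\cite{tang15}, adapted to the VRR-path subspace $\cP^S$, and to track how the validity probability of a random RR path enters the bound. The final sample count is $\theta = \lambda^*(\ell)/LB$, so the proof reduces to (i) bounding $\lambda^*(\ell)$ from above and (ii) bounding the Phase-1 estimate $LB$ from below on a good event of probability at least $1 - 1/n^\ell$.

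First I would bound $\lambda^*(\ell) = 2n(\tfrac12\alpha+\beta)^2\varepsilon^{-2}$. Since $(\tfrac12\alpha+\beta)^2 = O(\alpha^2+\beta^2)$, with $\alpha^2 = \ell\ln n + \ln 2$ and $\beta^2 = \tfrac12(\ln(\binom{n^-}{q_N}\binom{m}{q_E}) + \alpha^2)$, the elementary bounds $\ln\binom{n^-}{q_N}\le q_N\ln n^-$ and $\ln\binom{m}{q_E}\le q_E\ln m$ give $\alpha^2+\beta^2 = O(q_N\log n^- + q_E\log m + \ell\log n)$. Hence $\lambda^*(\ell) = O\!\big(n(q_N\log n^- + q_E\log m + \ell\log n)/\varepsilon^2\big)$.

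The crux is the lower bound on $LB$. Because a uniformly random RR path is valid with probability $\sigma^-(S)/n^-$, Lemma~\ref{lem:VRRpathidentity} shows that the coverage fraction $F^S_{\cR}(A)$ of the sampled VRR paths has mean $\rho_S(A)/\sigma^-(S)$, so $n^- F^S_{\cR}(A)$ estimates $n^-\rho_S(A)/\sigma^-(S)$, whose feasible maximum is $n^-\OPT/\sigma^-(S)$. Since VRR paths are drawn i.i.d.\ from $\cP^S$, each indicator $\I\{A\cap VE(P)\ne\emptyset\}$ is an independent Bernoulli trial, and a Chernoff/martingale bound controls the deviation of $n^- F^S_{\cR}(A)$ from its mean; a union bound over the $\binom{n^-}{q_N}\binom{m}{q_E}$ feasible attack sets and the $\log_2 n^-$ doubling rounds keeps the total failure probability below $1/n^\ell$ --- which is precisely what the numerator of $\lambda'$ in Eq.~\eqref{eq:lambdaprime} is calibrated to achieve. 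On the resulting good event, the empirical coverage of every feasible set is within a $(1\pm\varepsilon'/3)$ factor of its mean, so the greedy set $A_i$ returned by \NodeEdgeSelection (whose true reduction is at least half the optimum, by the partition-matroid max-cover guarantee) clears the test $n^- F^S_{\cR}(A_i)\ge(1+\varepsilon')x_i$ at the first guess $x_i=n^-/2^i$ that drops to $\Theta(n^-\OPT/\sigma^-(S))$, yielding $LB = n^- F^S_{\cR}(A_i)/(1+\varepsilon') = \Omega(n^-\OPT/\sigma^-(S))$.

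Combining the two bounds and using $n^- = \Theta(n)$,
\[
\theta = \frac{\lambda^*(\ell)}{LB} = \frac{O\!\big(n(q_N\log n^- + q_E\log m + \ell\log n)/\varepsilon^2\big)}{\Omega\!\big(n^-\OPT/\sigma^-(S)\big)} = O\!\left(\frac{(q_N\log n^- + q_E\log m + \ell\log n)\,\sigma^-(S)}{\OPT\,\varepsilon^2}\right),
\]
as claimed. I expect the main obstacle to be step (ii): obtaining the concentration \emph{uniformly} over the enormous feasible set $\binom{n^-}{q_N}\binom{m}{q_E}$ so that the union bound remains affordable (which forces the exact form of $\lambda'$), together with arguing that the greedy estimate triggers the stopping test at the right doubling round, so that $LB$ is neither too small (which would inflate $\theta$) nor larger than the scaled optimum $n^-\OPT/\sigma^-(S)$ (which would break the downstream $1/2-\varepsilon$ approximation guarantee).
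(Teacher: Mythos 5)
Your proposal is correct and takes essentially the same route as the paper's proof: both hinge on the observation that \alg{\AAIMM} runs the IMM machinery on the subspace $\cP^S$, where the effective optimum is $\OPT' = n^-\cdot\OPT/\sigma^-(S)$, and on the union bound being taken over $\binom{n^-}{q_N}\binom{m}{q_E}$ feasible attack sets. The only difference is one of granularity --- the paper invokes ``the analysis of IMM'' as a black box and then rescales $\OPT'$ to $\OPT$, whereas you explicitly unpack that analysis (bounding $\lambda^*(\ell)$ and lower-bounding $LB$ on the good event) before performing the same rescaling.
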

\begin{proof}
	When working on the subspace $\cP^S$ in \alg{\AAIMM}, we are working on the objective function $n^- \cdot \E_{P\sim \cP^S}[\I\{A \cap VE(P) \ne \emptyset \}]$
		(e.g. see lines~\ref{line:bimassignx}, \ref{line:bimestimate1}, and \ref{line:bimestimate2}).
	Thus by Lemma~\ref{lem:VRRpathidentity}, the real objective function $\rho_S(A)$ is only a fraction $\frac{\sigma^-(S)}{n^-}$ of the new objective function.
    Let $\OPT'$ be the optimal value of the new objective function.
    Applying the analysis of IMM~\cite{tang15}, we know that the number of VRR path samples that we need in the subspace $\cP^S$ is
    \begin{align}
        O\left( \frac{(q_N\log{n^-}+q_E\log{m}+\ell\log{n^-})\cdot n^-}{\OPT'\cdot \varepsilon^2} \right). \label{eq:subRRPsamples}
    \end{align}
    Because $\OPT = \frac{\sigma^-(S)}{n^-} \cdot \OPT'$, the above formula is changed to 
    \begin{align}
    O\left( \frac{(q_N\log{n^-}+q_E\log{m}+\ell\log{n^-})\cdot \sigma^-(S)}{\OPT \cdot \varepsilon^2} \right), \label{eq:subRRPsamples2}
\end{align}
\end{proof}

The special case of adversarial attacks on influence maximization is attacking nodes only or edge only, i.e, $q_E=0$ or $q_N=0$. Then, the greedy algorithms of the special cases can achieve at least $(1 - 1/e - \varepsilon)$ of the optimal performance.
However, our greedy algorithms for the AdvIM attacks both nodes and edges together.
The idea of our greedy approach is that at every greedy step, it searches all nodes and edges in the candidate space $\cC$ and picks the one having the maximum marginal influence deduction.
If the budget for node or edge exhausts, then the remaining nodes or edges are removed from $\cC$.
Note that as $\cC$ contains nodes and edges assigned to different partitions, \alg{\AAIMM} selects nodes or edges crossing partitions.
This falls into a greedy algorithm subject to a partition matroid constraint, which is defined below.

Given a set $U$ partitioned into disjoint sets $U_1, \dots, U_n$ and $\mathcal{I}=\{X\subseteq U\colon |X\cap U_i|\le k_i, \forall i\in[n]\}$, $(U,\mathcal{I})$ is called a {\em partition matroid}.
Thus, the node and edge space $\cA$ with the constraint of AdvIM, namely $(\cA, \{A\colon|A_N|\le q_N,|A_E|\le q_E\})$, is a partition matroid.
This indicates that AdvIM is an instance of submodular maximization under partition matroid, which can be solved by a greedy algorithm with
	$1/2$-approximation guarantee \cite{fisher1978analysis}.
Using this result, we can obtain the result for our \alg{\AAIMM} algorithm.

\begin{restatable}{theorem}{theoremimm} \label{thm:imm}
	For every $\varepsilon > 0$ and $\ell > 0$, with probability at least
	$1-\frac{1}{n^\ell}$, the output $A^o$ of the \alg{\AAIMM} algorithm framework satisfies $\rho_S(A^o) \geq \left(\frac{1}{2} - \varepsilon\right) \rho_S(A^*)$.
	In this case, the expected running time for \alg{\AAIMM} is $O\left( \frac{(q_N\log{n^-}+q_E\log{m}+\ell\log{n^-})\cdot \sigma^-(S)}{\OPT\cdot \varepsilon^2} \cdot \ERPN \right)$,
    where $\ERPN$ is the mean time of generating a VRR path.
\end{restatable}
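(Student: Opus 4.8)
The plan is to follow the analysis of \IMM~\cite{tang15}, substituting its two influence-maximization-specific ingredients with the ones appropriate to AdvIM. Lemma~\ref{lem:VRRpathidentity} is the bridge: for every feasible attack set $A$, the scaled empirical coverage $n^-\cdot F^S_{\cR}(A)$, i.e. $n^-$ times the fraction of VRR paths in $\cR$ intersected by $A$, is an unbiased estimator of $\rho_S(A)$. Hence maximizing $\rho_S$ over the budget constraints becomes the max-cover problem solved by \NodeEdgeSelection{} on $\cR$. Since the candidate nodes and edges obey the partition-matroid constraint $|A_N|\le q_N,\ |A_E|\le q_E$ and $F^S_{\cR}$ is monotone submodular, the greedy routine returns $A^o$ with $F^S_{\cR}(A^o)\ge \tfrac12\max_{A\text{ feasible}}F^S_{\cR}(A)\ge\tfrac12 F^S_{\cR}(A^*)$ by the matroid-greedy guarantee of Fisher et al.~\cite{fisher1978analysis}; this is the source of the factor $\tfrac12$ in place of the usual $1-1/e$.

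For the approximation I would use the two-event decomposition of \IMM, calibrated through $\lambda'$, $\lambda^*(\ell)$, $\alpha$, and $\beta$. Event $E_1$ is that the fixed optimum is not underestimated, $n^-F^S_{\cR}(A^*)\ge(1-\varepsilon_a)\OPT$, which a lower-tail (martingale) Chernoff bound yields once $|\cR|\ge \lambda^*(\ell)/\OPT$. Event $E_2$ is that no bad set is overestimated: for every feasible $A$ with $\rho_S(A)<(\tfrac12-\varepsilon)\OPT$ one has $n^-F^S_{\cR}(A)<\tfrac12(1-\varepsilon_a)\OPT$; the union bound here ranges over all $\binom{n^-}{q_N}\binom{m}{q_E}$ feasible attack sets, which is exactly the cardinality that replaces $\binom{n}{k}$ of \IMM{} in the definitions of $\lambda'$ and $\beta$. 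On $E_1\cap E_2$, the matroid-greedy bound gives $n^-F^S_{\cR}(A^o)\ge\tfrac12 n^-F^S_{\cR}(A^*)\ge\tfrac12(1-\varepsilon_a)\OPT$, so the contrapositive of $E_2$ forces $\rho_S(A^o)\ge(\tfrac12-\varepsilon)\OPT$. It remains to ensure the algorithm actually draws enough paths: Phase~1 performs the doubling search over $x_i=n^-/2^i$, and the test on line~\ref{line:bimestimate1} produces, with high probability, a lower bound $LB\le\OPT$ so that $\theta=\lambda^*(\ell)/LB\ge\lambda^*(\ell)/\OPT$; the per-iteration failure of the $\log_2 n^-$ rounds is absorbed by the inflation of $\ell$. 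A union bound over $\neg E_1$, $\neg E_2$, and the Phase-1 estimation failure then gives overall success probability at least $1-n^{-\ell}$, and the resulting sample count is precisely that of Lemma~\ref{lem:sampling}.

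For the running time, the total cost is generating the paths in $\cR$ plus running the greedy max-cover. By Lemma~\ref{lem:sampling} the expected number of VRR paths is $O\!\big((q_N\log n^-+q_E\log m+\ell\log n^-)\,\sigma^-(S)/(\OPT\,\varepsilon^2)\big)$, and each is produced in expected time $\ERPN$, so the generation cost is the product of these two, matching the stated bound. The greedy selection inspects each sampled path a constant number of times, hence costs $O(\sum_{P\in\cR}|VE(P)|)$, which is dominated by generation since producing a path already costs at least proportional to its length; the Phase-1 binary search for $\gamma$ is negligible. Taking expectations over the random sampling (and over the random $\theta$) yields the claimed expected running time.

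I expect the main obstacle to be the concentration argument for the data-dependent output $A^o$ under Phase~1's adaptive sampling: paths are drawn until $|\cR|$ crosses a stopping threshold, so the coverage indicators are not independent and a martingale form of the Chernoff bound (as in~\cite{tang15}) is required. The accompanying bookkeeping---verifying that the new union-bound cardinality $\binom{n^-}{q_N}\binom{m}{q_E}$ and the reweighting from $n$ to $\sigma^-(S)$ propagate correctly through the $\lambda'$, $\lambda^*(\ell)$, $\alpha$, $\beta$ calibration so that the Phase-1 and Phase-2 failure probabilities still sum to at most $n^{-\ell}$---is the most delicate part, though it parallels the original \IMM{} analysis step by step.
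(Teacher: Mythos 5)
Your proposal is correct and follows essentially the same route as the paper: the paper's (much terser) proof likewise combines the Fisher et al.\ partition-matroid greedy bound for the factor $\tfrac12$, the standard \IMM{} two-event concentration analysis with the union bound taken over $\binom{n^-}{q_N}\binom{m}{q_E}$ feasible attack sets (spelled out in the paper's appendix Lemmas~\ref{lemma:condition} and~\ref{lemma:numtheta}), and Lemma~\ref{lem:sampling} together with the definition of $\ERPN$ for the running time. The details you flag as delicate (the martingale Chernoff bound under adaptive sampling and the propagation of the new union-bound cardinality through $\lambda'$, $\lambda^*(\ell)$, $\alpha$, $\beta$) are exactly what the paper delegates to the cited \IMM{} analysis and its appendix lemmas.
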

\begin{proof}
Following \cite{fisher1978analysis}, our $\NodeEdgeSelection$ procedure finds an attack set that covers at least $1/2$ of all the VRR paths generated.
Then following the standard analysis of IMM, our \alg{\AAIMM} algorithm provides $1/2-\varepsilon$ approximation with probability at least
	$1-\frac{1}{n^\ell}$.
The expected running time follows Lemma~\ref{lem:sampling}.
\end{proof}
Theorem~\ref{thm:imm} summarizes the theoretical guarantee of the approximation ratio and the running time of our algorithm framework \alg{\AAIMM}.
Because we are attacking both nodes and edges with separate budgets, our approximation guarantee is $1/2 - \varepsilon$.
If we only attack nodes (i.e. $q_E=0$) or only attack edges (i.e. $q_N=0$), then we could achieve the approximation ratio of $1-1/e-\varepsilon$.
The running time result includes $\OPT$.
Although $\OPT$ cannot be obtained in general, it still provides a precise idea on the running time, and is useful to compare different implementations.
In the next section, we will discuss concrete implementations of the VRR path sampling method, and in some cases we will fully realize the time complexity.

\subsection{VRR Path Sampling} \label{sec:VRRsimulation}

VRR path sampling is the key new component to fully realize our \alg{\AAIMM} algorithm.
Here, we discuss three methods and their guarantees, and empirically evaluate these methods in our experiments.

%
%

\paragraph{Naive VRR Path Sampling.}
The first implementation is naively generate a RR path $P$ starting from a random root $v \in V \setminus S$, and if $P$ is valid then return it; otherwise
	regenerate a new path (see Algorithm~\ref{alg:nvrrp}).
It is easy to see that to generate one VRR path, we need to generate a number of RR paths.
Let $\ERP$ be the expected running time of generating one RR path.
By a simple argument based on live-edge graphs, we can get that on average we need to generate $n^- / \sigma^-(S)$ RR paths to obtain one VRR path that reaches the seed set $S$.
This means $\ERPN = \ERP \cdot n^- / \sigma^-(S)$.

\begin{algorithm}[h] 
	\caption{{\NVRRP}: Naive VRR Path Sampling} \label{alg:nvrrp}
	\KwIn{Graph $G$, seed set $S$}
	
	\Repeat{$P \cap S \neq \emptyset$}{
		Randomly select a root $v \in V \setminus S$, and generate the reverse-reachable path $P$ rooted at $v$\;
	}
	\bf{return} a VRR path $P$.
\end{algorithm}

\begin{restatable}{theorem}{naiveimm} \label{thm:naiveimm}
	Naive VRR path sampling (Algorithm~\ref{alg:nvrrp}) correctly samples a VRR path.
	The expected running time of $\AAIMM$ with naive VRR path sampling (Algorithm~\ref{alg:nvrrp}) is 
    \begin{align}
    O\left( \frac{(q_N\log{n^-}+q_E\log{m}+\ell\log{n^-})\cdot n^-}{\OPT\cdot \varepsilon^2} \cdot \ERP \right). \label{eq:naiveimm}
    \end{align}
\end{restatable}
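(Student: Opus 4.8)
The plan is to establish the two assertions separately: first that the rejection-based procedure in Algorithm~\ref{alg:nvrrp} samples from the correct subspace $\cP^S$, and second that the stated running time follows by substituting the naive per-VRR-path cost into the general bound of Theorem~\ref{thm:imm}.

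For the correctness claim, I would observe that Algorithm~\ref{alg:nvrrp} is exactly rejection sampling: it draws a plain RR path $P_L$ (root $v$ uniform over $V\setminus S$ and live-edge graph $L$ random), accepts it precisely when it is valid, i.e. when $S\cap VE(P_L)\ne\emptyset$, and otherwise discards it and redraws independently. A standard rejection-sampling argument then shows the accepted path is distributed as $P_L$ conditioned on the event $\{S\cap VE(P_L)\ne\emptyset\}$. Concretely, for any valid path $p$ one has $\Pr[\text{output}=p]=\Pr[P_L=p]/\Pr[P_L\text{ valid}]$, so the output law is the full RR-path law restricted and renormalized to valid paths; this is exactly the distribution over $\cP^S$ that the second identity of Lemma~\ref{lem:VRRpathidentity} assigns to $P_L^S$. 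Hence each insertion into $\cR$ is an i.i.d.\ draw from $\cP^S$, which is what the $\AAIMM$ framework requires.

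For the running time, I would begin from Theorem~\ref{thm:imm}, which expresses the expected cost of \alg{\AAIMM} as the sample-count bound of Lemma~\ref{lem:sampling} multiplied by $\ERPN$, the mean time to produce one VRR path. It then remains only to evaluate $\ERPN$ under naive sampling. The acceptance probability of a single RR path equals the probability that a uniformly chosen root in $V\setminus S$ lies in $\Gamma(L,S)$; averaging over $L$, this is $\E_L[|\Gamma(L,S)|-|S|]/n^- = \sigma^-(S)/n^-$. Thus the number of redraws per accepted path is geometric with mean $n^-/\sigma^-(S)$, and since each RR path costs $\ERP$ in expectation, $\ERPN=\ERP\cdot n^-/\sigma^-(S)$, exactly as stated in the paragraph preceding the theorem. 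Substituting into the Theorem~\ref{thm:imm} bound, the factor $\sigma^-(S)$ cancels against the $1/\sigma^-(S)$ inside $\ERPN$, leaving $n^-$ and the claimed expression~\eqref{eq:naiveimm}.

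The only genuinely non-routine step is the acceptance-probability computation, which rests on the live-edge equivalence: a reverse simulation from $v$ reaches $S$ iff $v\in\Gamma(L,S)$, so that $\sum_{v\in V\setminus S}\Pr_L[v\in\Gamma(L,S)]=\E_L[|\Gamma(L,S)|-|S|]=\sigma^-(S)$. Everything else---the rejection-sampling identity and the algebraic cancellation of $\sigma^-(S)$---is mechanical. If the paper prefers brevity, I can simply invoke the already-stated identity $\ERPN=\ERP\cdot n^-/\sigma^-(S)$, after which the running-time half of the theorem reduces to a one-line substitution into Theorem~\ref{thm:imm}.
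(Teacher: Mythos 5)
Your proposal is correct and follows essentially the same route as the paper's proof: correctness via the observation that rejection sampling yields the RR-path law conditioned on validity (i.e.\ the subspace $\cP^S$), then the acceptance-probability computation $\E_L[(|\Gamma(L,S)|-|S|)/n^-]=\sigma^-(S)/n^-$ giving $\ERPN=\ERP\cdot n^-/\sigma^-(S)$, and finally substitution into Theorem~\ref{thm:imm} with the $\sigma^-(S)$ factors cancelling. The only difference is that you spell out the rejection-sampling and geometric-expectation steps that the paper treats as obvious, which is a matter of detail rather than of approach.
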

\begin{proof}
It is obvious that the naive VRR path sampling will return a VRR path according to distribution $\cP^S$.
We just need to prove that $\ERPN = \ERP \cdot n^- / \sigma^-(S)$, and the rest follows Theorem~\ref{thm:imm}.
For each live-edge graph $L$, if we randomly select a root $v \in V\setminus S$, then with probability $(|\Gamma(L,S)|-|S|)/n^-$, $v$ is reachable from $S$ in $L$, which means the RR path from $v$ will intersect with $S$ on this live-edge graph $L$.
Taking expectation over $L$, we know that the probability of an RR path is valid is $\E_L [(|\Gamma(L,S)|-|S|)/n^- ] = \sigma^-(S)/n^-$.
Therefore, on average we need to generate $ n^- / \sigma^-(S)$ RR paths to get one VRR path, which means $\ERPN = \ERP \cdot n^- / \sigma^-(S)$.
%
\end{proof}

\paragraph{Forward-Backward VRR Path Sampling.}
To avoid wasting RR path samplings as in the naive method, we can first do a forward simulation from $S$ to generate a forward forest, recording the nodes and edges that a forward simulation from $S$ will pass.
Then, when randomly selecting a root $v$, we restrict the selection to be among the nodes touched by the forward simulation.
Finally, the VRR path is the path from $S$ to $v$ recorded in the forward forest.
This is the forward-backward sampling method given in Algorithm~\ref{alg:fbvrrp}.
Let $\EFF(S)$ be the meantime of generating a forward forest from seed set $S$.
Then we have the following result on this method.

\begin{algorithm}[!h] 
	\caption{{\FBVRRP}: Forward-Backward VRR Path Sampling} \label{alg:fbvrrp}
	\KwIn{Graph $G$, seed set $S$}
	
		Initialize an empty forest $F$\;
		Forward propagating a new forest $F$ by using LT model with $S$ and $G$\;
		Randomly select a node $v \in F \setminus S$, and set path $P$ to be the one from $S$ to $v$ in the forest $F$\;
		$\cR^o \leftarrow \cR^o \cup P$;
	\bf{return} a VRR path $P$.
\end{algorithm}

\begin{restatable}{theorem}{fbimm} \label{thm:fbimm}
	Forward-backward VRR path sampling (Algorithm~\ref{alg:fbvrrp}) correctly samples a VRR path.
	The expected running time of $\AAIMM$ with forward-backward VRR path sampling (Algorithm~\ref{alg:fbvrrp}) is 
    \begin{align}
    O\left( \frac{(q_N\log{n^-}+q_E\log{m}+\ell\log{n^-})\cdot \sigma^-(S)}{\OPT\cdot \varepsilon^2} \cdot \EFF(S) \right). \label{eq:fbimm}
    \end{align}
\end{restatable}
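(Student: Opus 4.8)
The plan is to prove the two assertions separately: first that the sampler in Algorithm~\ref{alg:fbvrrp} is \emph{distribution‑correct} in the sense that it draws a VRR path from exactly the subspace $\cP^S$ used by the naive sampler, and second that its per‑path cost is $O(\EFF(S))$; the running‑time bound then follows by substituting $\ERPN = O(\EFF(S))$ into Theorem~\ref{thm:imm}, and the approximation guarantee is inherited verbatim from that theorem once distribution‑correctness is in hand.

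For correctness I would first record the structural fact that links the backward and forward views. Fix a live‑edge graph $L$. Since every non‑seed node has at most one incoming live edge, the restriction of $L$ to $\Gamma(L,S)$ is a forest rooted at the seeds, and I would show: for a root $v \in V\setminus S$, the reverse simulation $P_{L,v}$ terminates at a seed (is valid) if and only if $v \in \Gamma(L,S)\setminus S$, and in that case $P_{L,v}$ is exactly the reversal of the unique seed‑to‑$v$ path in this forest. In particular the loop‑termination case (c) never fires on a valid root because a forest is acyclic. This identifies the ``path recorded from $S$ to $v$ in the forward forest'' with the RR path $P_{L,v}$, and simultaneously shows that every path the algorithm outputs is valid.

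Next I would argue that the forward forest carries the right law: by the live‑edge characterization of LT propagation, a forward simulation from $S$ reveals precisely the incoming live edges of the reached nodes, each drawn with its correct marginal $w(u,v)$ and independently across nodes, so the generated forest is distributed as the restriction of a random $L$ to $\Gamma(L,S)$. The crux — and the step I expect to be the main obstacle — is the root‑selection step: I must show that generating such a forest and then choosing $v$ uniformly among the reached non‑seed nodes reproduces the distribution of $P_L$ conditioned on validity, i.e. the measure $\cP^S$ underlying Lemma~\ref{lem:VRRpathidentity}. The delicate point is that restricting the root to the reached set is a conditioning event, and one has to verify that it coincides with the naive scheme of drawing $v$ uniformly over all of $V\setminus S$ and rejecting invalid draws, i.e. that conditioning on the reached set does not reweight the induced path distribution (in particular, how the size of the reachable set enters must be tracked). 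I would make this precise by writing both measures evaluated on a fixed valid path $p$ and checking term‑by‑term that they agree; this bookkeeping, rather than any deep idea, is where the real work lies.

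Finally, for the running time, each VRR path is produced by a single forward simulation followed by a read‑off of the recorded path whose cost is bounded by the path length and absorbed into the forest‑generation cost, so the mean time to generate one VRR path is $\ERPN = O(\EFF(S))$. Substituting this together with the sample‑count bound of Lemma~\ref{lem:sampling}, namely $\theta = O\!\big((q_N\log n^- + q_E\log m + \ell\log n^-)\,\sigma^-(S)/(\OPT\,\varepsilon^2)\big)$, into the expected‑time expression of Theorem~\ref{thm:imm} yields exactly Eq.~\eqref{eq:fbimm}.
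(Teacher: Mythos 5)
Your structural claims are correct (the restriction of a live-edge graph $L$ to $\Gamma(L,S)$ is a seed-rooted forest, and $P_{L,v}$ is valid iff $v \in \Gamma(L,S)\setminus S$, in which case it is the reversal of the unique forest path to $v$), and you have put your finger on exactly the right crux. But the step you defer as ``bookkeeping'' --- verifying that drawing the root uniformly from the reached set of a fresh forest reproduces $\cP^S$ --- is not bookkeeping: it is false, so your plan cannot be completed. Write $N_L = |\Gamma(L,S)\setminus S|$. In the naive rejection scheme the acceptance probability given $L$ is $N_L/n^-$, so conditioning on validity size-biases $L$ toward live-edge graphs with large reached sets, and a fixed valid path $p$ receives probability $\Pr_L[p\subseteq L]/\sigma^-(S)$. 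Algorithm~\ref{alg:fbvrrp} instead draws $L$ from its \emph{unconditioned} law and then the root uniformly inside the forest, so $p$ receives probability proportional to $\E_L[\I\{p\subseteq L\}/N_L]$. These measures agree only when $N_L$ is essentially constant on $\{p \subseteq L\}$. Counterexample: $S=\{s\}$, non-seeds $a,b$, edges $(s,a),(s,b),(a,b)$ all of weight $1/2$. The three valid paths $\langle s,a\rangle$, $\langle s,b\rangle$, $\langle s,a,b\rangle$ have $\cP^S$-probabilities $(2/5,\,2/5,\,1/5)$, whereas forward-backward sampling outputs them with probabilities $(1/3,\,1/2,\,1/6)$. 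The discrepancy is not a harmless rescaling: by Lemma~\ref{lem:VRRpathidentity}, coverage probability under $\cP^S$ must be proportional to influence reduction, and indeed both $\{a\}$ and $\{b\}$ cover a $\cP^S$-path with probability $3/5$ (matching $\rho_S(\{a\})=\rho_S(\{b\})=3/4$); under forward-backward sampling $\{a\}$ covers with probability $1/2$ but $\{b\}$ with probability $2/3$, so the estimator is biased and even the greedy ranking is corrupted.

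For comparison, the paper's own proof of Theorem~\ref{thm:fbimm} is a two-sentence assertion that makes precisely the leap you flagged: it observes that for each \emph{fixed} $L$, conditioning a uniform root on validity yields the uniform distribution on $\Gamma(L,S)\setminus S$, and concludes correctness --- ignoring that the conditioning event has probability $N_L/n^-$ varying with $L$, so integrating over $L$ reweights the mixture of forests. Your instinct that this is the main obstacle is therefore sound; your proposal fails only because it presumes the obstacle can be cleared, when in fact it cannot without changing the sampler (e.g., importance-weighting each forward-backward path by the forest size $N_L$, which restores estimates proportional to $\rho_S(A)$ and suffices for the max-cover step, or using all $N_L$ paths of each forest at the price of correlated samples, as the paper's forward-forest algorithm effectively does). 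Your running-time argument --- $\ERPN = O(\EFF(S))$ substituted together with Lemma~\ref{lem:sampling} into Theorem~\ref{thm:imm} --- matches the paper and is fine, but it is moot until the distributional claim is repaired.
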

\begin{proof}
For any fixed live-edge graph $L$, conditioned on sampling a VRR path, random sampling a root $v$ is equivalent of sampling from $\Gamma(L,S) \setminus S$
	uniformly at random, which is exactly from the forward forest generated by forward simulation from $S$.
Thus, the forward-backward sampling method is correct.
\end{proof}

Compared with naive sampling, the forward-backward sampling saves those sampling of invalid RR paths.
However, it needs to generate a complete forward forest first, which is more expensive than generating one reverse path.
Therefore, there is a tradeoff between the forward-backward method and the naive method, and the tradeoff is exactly quantified by their running time results: If $\EFF(S) / \ERP < n^- / \sigma^-(S)$, then the forward-backward method is faster; otherwise, the naive method is faster.
Therefore, which one is better will depend on the actual graph instance.

Since the forward-backward method generates a forward forest first, one may be tempted to sample more VRR paths from this forest, but it will generate correlations among these VRR paths.
Another attempt of generating a number of forward forests to record the frequencies of node appearances, and then use these frequencies to guide the RR path sampling would also deviate from the subspace probability distribution $\cP^S$, and thus these methods would not provide theoretical guarantee of the overall correctness of \alg{\AAIMM}.

\paragraph{Reverse-Reachable Simulation with DAG}
The naive method above wastes many invalid RR path samplings, while the forward-backward method wastes many branches in the forward forest.
Thus, we desire a method that could do a simple VRR path sampling without such waste.
For directed-acyclic graphs (DAGs), we do discover such a VRR path sampling method by re-weighting edges.


Suppose $G$ is a directed acyclic graph. As shown in \cite{ChenYZ10}, in a DAG, influence spread of a seed set can be computed in linear time.
Let $\ap_v(S)$ be the probability of $v$ being activated when $S$ is the seed set.
Let $N^-(v)$ be the set of $v$'s in-neighbors. Then in a DAG $G$, we have $\ap_v(S) = \sum_{u\in N^-(v)} \ap_u(S) \cdot w(u,v)$.
The above computation can be carried out with one traversal of the DAG in linear time from the seed set $S$ following any topological sort order.
For a DAG $G$, we propose the following sampling of a VRR path in $\cP^S$ as \alg{\DAGVRRP} in Algorithm \ref{alg:vdagp}.

\begin{algorithm}[h] 
	\caption{{\DAGVRRP}: DAG VRR Path Sampling} \label{alg:vdagp}
	\KwIn{Graph $G$, seed set $S$}
	
	    Randomly sample root $v \in V \setminus S$ with probability proportional to $ap_v(S)$, i.e.  with probability $\frac{\ap_v(S)}{\sum_{u\in V\setminus S} \ap_u(S)}= \frac{\ap_v(S)}{\sigma^-(S)}$\;
	    $u_0 \leftarrow v; P \leftarrow \{u_0\}$; $i \leftarrow 0$\;
    	    \Repeat{$u_{i} \in S$ }{
    	        Sample $u_{i+1}\in N^-(u_i)$ with probability proportional to $\ap_{u_{i+1}}(S) \cdot w(u_{i+1},u_i)$, i.e. 
    	        $ \frac{\ap_{u_{i+1}}(S) \cdot w(u_{i+1},u_i)}{\sum_{u\in N^-(u_i)} \ap_{u}(S) \cdot w(u,u_i)}=
    		    \frac{\ap_{u_{i+1}}(S) \cdot w(u_{i+1},u_i)}{\ap_{u_i}(S)}$\;
    		    Add node $u_{i+1}$ and edge $(u_{i+1}, u_i)$ into path $P$\;
    		    $i \leftarrow i+1$\;
    	    }
	\bf{return} a VRR path $P$.
\end{algorithm}



\begin{restatable}{lemma}{dag} \label{lem:dag}
	If the graph $G$ is a DAG, then any path sampled by Algorithm \alg{\DAGVRRP} follows the subspace distribution $\cP^S$.
\end{restatable}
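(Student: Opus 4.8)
The plan is to compute the probability that \alg{\DAGVRRP} outputs a given valid path and show it equals the probability the subspace distribution $\cP^S$ assigns to the same path. Fix a valid path $P = (u_0, u_1, \dots, u_j)$, where $u_0 = v \in V\setminus S$ is the sampled root, each $(u_{i+1},u_i)\in E$, and $u_j \in S$ is the first seed reached (so the loop terminates at $u_j$). Since the two distributions will be supported on the same set of valid paths, it suffices to match their probabilities pointwise.

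First I would pin down the target distribution. Recall $\cP^S$ is the law of a random RR path $P_L$ conditioned on validity. In the reverse simulation on a random live-edge graph $L$, the root is uniform over $V\setminus S$, and then each node $u_i$ independently selects $(u_{i+1},u_i)$ as its unique live in-edge with probability $w(u_{i+1},u_i)$; because the walk stops upon hitting $u_j\in S$, the live in-edge of $u_j$ is irrelevant. Hence $\Pr[P_L = P] = \frac{1}{n^-}\prod_{i=0}^{j-1} w(u_{i+1},u_i)$. For the normalizing constant I would use that in the live-edge model a root $v\in V\setminus S$ yields a valid path exactly when $v$ is reachable from $S$ in $L$, which occurs with probability $\ap_v(S)$; averaging over the uniform root gives $\Pr[P_L \text{ valid}] = \frac{1}{n^-}\sum_{v\in V\setminus S}\ap_v(S) = \frac{\sigma^-(S)}{n^-}$. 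Dividing yields $\Pr_{\cP^S}[P] = \frac{1}{\sigma^-(S)}\prod_{i=0}^{j-1} w(u_{i+1},u_i)$.

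Next I would compute the probability that \alg{\DAGVRRP} returns $P$. Multiplying the root probability $\ap_{u_0}(S)/\sigma^-(S)$ by the step probabilities $\frac{\ap_{u_{i+1}}(S)\,w(u_{i+1},u_i)}{\ap_{u_i}(S)}$ produces a telescoping product of the $\ap$ factors, collapsing to $\frac{\ap_{u_j}(S)}{\sigma^-(S)}\prod_{i=0}^{j-1} w(u_{i+1},u_i)$. Since $u_j\in S$ is a seed, $\ap_{u_j}(S)=1$, so this equals $\frac{1}{\sigma^-(S)}\prod_{i=0}^{j-1} w(u_{i+1},u_i)$, matching $\Pr_{\cP^S}[P]$ exactly. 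Along the way I would confirm the sampling is well defined: the root probabilities sum to $1$ because $\sum_{v\in V\setminus S}\ap_v(S)=\sigma^-(S)$, and at a non-seed $u_i$ the transition probabilities sum to $1$ precisely by the DAG identity $\ap_{u_i}(S)=\sum_{u\in N^-(u_i)}\ap_u(S)\,w(u,u_i)$.

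The delicate points, which I would flag as the crux, are the two facts that make the DAG special. The telescoping relies on the linear activation recurrence $\ap_v(S)=\sum_{u\in N^-(v)}\ap_u(S)w(u,v)$, valid in a DAG because the live-edge choice into $v$ is independent of the activation of its ancestors (no cycles); this both normalizes each transition distribution and drives the clean cancellation. I would also argue the walk terminates only at a seed, so its support is exactly the valid paths: every visited node has positive $\ap$, and by the recurrence a non-seed node with $\ap_{u_i}(S)>0$ has some in-neighbor with positive contribution, so the walk never dead-ends at a non-seed; since the graph is acyclic it reaches a seed in finitely many steps, and condition (c) of the simulation never fires. With supports and pointwise probabilities matching, the output distribution equals $\cP^S$.
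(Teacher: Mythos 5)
Your proof is correct and follows essentially the same approach as the paper's: both hinge on the telescoping product of the re-weighted transition probabilities $\frac{\ap_{u_{i+1}}(S)\,w(u_{i+1},u_i)}{\ap_{u_i}(S)}$, the fact that $\ap_{u_j}(S)=1$ at the terminating seed, and the DAG recurrence ensuring the walk is well defined and always terminates in $S$. The only cosmetic difference is that you match absolute pointwise probabilities by explicitly computing the normalizer $\Pr[P_L \text{ valid}]=\sigma^-(S)/n^-$, whereas the paper matches ratios $\pi'(P_1)/\pi'(P_2)=\pi(P_1)/\pi(P_2)$ of path probabilities, which sidesteps the normalization; both are valid ways to conclude equality with $\cP^S$.
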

\begin{proof}
	First, notice that Algorithm \alg{\DAGVRRP} re-weights the incoming edges $(u,v)$ of every node $v$ according to the activation probability $\ap_u(S)$.
	Therefore any node $u$ that cannot be activated by $S$ will cause $(u,v)$ to have zero weight, and thus will not be sampled in the reverse sampling process.
	Therefore, the reverse sampling process will always sample toward the seed set, and since the graph has no cycle, it will always end at a seed node in $S$.
	This means that the output of \alg{\DAGVRRP} is always a VRR path.
	We just need to show that it follows the subspace distribution $\cP^S$.
	
	To show that its distribution is the same as $\cP^S$, 
	all we need to show is that if we have two paths $P_1$ and $P_2$ that both start from a seed node in $S$, then the ratio of the probabilities
	of generating these two paths by the above procedure is the same as the ratio in the original RR path space. 
	Let $P_1 = (u_{s}, u_{s-1}, \ldots, u_0)$ and $P_2 = (z_t, z_{t-1}, \ldots, z_0)$, with $u_0, z_0 \in V\setminus S$ and $u_s, z_t \in S$.
	Let $\pi(P)$ be the probability of sampling the RR path $P$ in the original space.
	Then we have $\pi(P_1) = \frac{1}{n^-} \prod_{i=1}^s w(u_{i}, u_{i-1})$, and $\pi(P_2) = \frac{1}{n^-} \prod_{j=1}^t w(z_{j}, z_{j-1})$.
	So the ratio is
	\begin{align}
	\frac{\pi(P_1)}{\pi(P_2)} = \frac{\prod_{i=1}^s w(u_{i}, u_{i-1})}{\prod_{j=1}^t w(z_{j}, z_{j-1})}.
	\end{align}
	
	Now let $\pi'(P)$ be the probability of generating path $P$ by \alg{\DAGVRRP}. 
	Then we have 
	\begin{align}
	\pi'(P_1) = \frac{\ap_{u_0}(S)}{\sigma^-(S)} \prod_{i=1}^s \frac{\ap_{u_{i}}(S) \cdot w(u_{i},u_{i-1})}{\ap_{u_{i-1}}(S)} 
	= \frac{\prod_{i=1}^s w(u_{i}, u_{i-1})}{\sigma^-(S)},
	\end{align}
	Where the above derivation also uses the fact that $u_s\in S$ and thus $\ap_{u_s}(S)=1$.
	Similarly, we have
	\begin{align}
	\pi'(P_2) = \frac{\ap_{z_0}(S)}{\sigma^-(S)} \prod_{j=1}^t \frac{\ap_{z_{j}}(S) \cdot w(z_{j},z_{j-1})}{\ap_{z_{j-1}}(S)} 
	= \frac{\prod_{j=1}^t w(z_{j}, z_{j-1})}{\sigma^-(S)}.
	\end{align}

    Therefore, clearly $\pi'(P_1)/\pi'(P_2) = \pi(P_1)/\pi(P_2)$, and the above procedure correctly generates a VRR path from $\cP^S$.
\end{proof}
	
Therefore we can use \alg{\DAGVRRP} sample from $\cP^S$.
Now we just need to analyze its running time $\ERPN$.
The generation is still similar to the LT reverse simulation.
For any $u\not\in S$, let $\tau(u)$ be the time needed to do a reverse simulation step from $u$. 
For the LT model, a simple binary search implementation takes $\tau(u) = O(\log_2 d_u^-)$ time, where $d_u^-$ is the indegree of $u$.
For an RR path $P$, let $\omega(P) = \sum_{u\in V(P)\setminus S} \tau(u)$ be the total time needed to generate path $P$.
Let $\tau = \sum_{u\in V \setminus S} \tau(u)$.

\begin{restatable}{lemma}{erpn} \label{lem:ERPN}
	Let $\tilde{v}$ be a randomly sampled node from $V\setminus S$, 
	with sample probability proportional to $\tau(v)$. 
	Let $P$ be a random VRR path generated by \alg{\DAGVRRP}, then we have $\ERPN =  \E_{P\sim \cP^S}[\omega(P)] = \frac{\tau}{\sigma^-(S)} \cdot \E_{\tilde{v}}[\rho_S(\{\tilde{v}\})]$.
\end{restatable}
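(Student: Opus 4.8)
The plan is to unfold $\ERPN$ from its definition and then reduce everything to \emph{singleton} influence reductions via Lemma~\ref{lem:VRRpathidentity}. Since Lemma~\ref{lem:dag} guarantees that \alg{\DAGVRRP} produces paths distributed according to the subspace distribution $\cP^S$, and the cost of generating a path $P$ is by definition $\omega(P)=\sum_{u\in V(P)\setminus S}\tau(u)$ (one reverse-simulation step, of cost $\tau(u_i)$, is performed at each non-seed node $u_i$ visited before the walk hits a seed, so the seed endpoint incurs no cost, matching $V(P)\setminus S$), I have $\ERPN=\E_{P\sim\cP^S}[\omega(P)]$. So it suffices to evaluate this expectation.

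First I would write $\cP^S$ as the conditional law of the unconditioned RR path $P_L$ given that it is valid, so that
\[
\E_{P\sim\cP^S}[\omega(P)]=\frac{\E_L\!\left[\omega(P_L)\,\I\{P_L \text{ valid}\}\right]}{\Pr[P_L \text{ valid}]}.
\]
For the numerator I expand $\omega(P_L)=\sum_{u\in V\setminus S}\tau(u)\,\I\{u\in V(P_L)\}$ and use linearity to pull each $\tau(u)$ out of the expectation, leaving $\sum_{u\in V\setminus S}\tau(u)\,\Pr[u\in VE(P_L^S)]$, since the event ``$u$ lies on $P_L$ and $P_L$ is valid'' is exactly ``$u\in VE(P_L^S)$''.

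The key step is then to identify each factor. Applying Lemma~\ref{lem:VRRpathidentity} to the singleton attack set $A=\{u\}$ gives $\Pr[u\in VE(P_L^S)]=\rho_S(\{u\})/n^-$, so the numerator becomes $\frac{1}{n^-}\sum_{u\in V\setminus S}\tau(u)\,\rho_S(\{u\})$. For the denominator I would use that a reverse simulation from a fixed root $v$ reaches $S$ exactly when $v$ is activated in the forward LT process, i.e.\ with probability $\ap_v(S)$; averaging over the uniformly random root yields $\Pr[P_L \text{ valid}]=\frac{1}{n^-}\sum_{v\in V\setminus S}\ap_v(S)=\sigma^-(S)/n^-$. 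Dividing, the factor $n^-$ cancels and I obtain $\E_{P\sim\cP^S}[\omega(P)]=\frac{1}{\sigma^-(S)}\sum_{u\in V\setminus S}\tau(u)\,\rho_S(\{u\})$.

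Finally I would match this with the claimed form. By the definition of $\tilde v$ (drawn from $V\setminus S$ with probability $\tau(v)/\tau$), $\E_{\tilde v}[\rho_S(\{\tilde v\})]=\frac{1}{\tau}\sum_{v\in V\setminus S}\tau(v)\,\rho_S(\{v\})$, whence $\frac{\tau}{\sigma^-(S)}\,\E_{\tilde v}[\rho_S(\{\tilde v\})]$ coincides with the expression just derived, completing the proof. The only delicate points are the bookkeeping of which nodes incur a $\tau(u)$ cost and the correct validity normalization; I expect the singleton specialization of Lemma~\ref{lem:VRRpathidentity}, which converts path-membership probabilities into influence reductions, to be the crux of the argument.
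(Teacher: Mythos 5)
Your proof is correct and takes essentially the same route as the paper's: both expand $\omega(P)$ into per-node $\tau(u)$ indicator contributions and invoke the singleton case of Lemma~\ref{lem:VRRpathidentity} to convert the probability that a node lies on a valid path into $\rho_S(\{u\})$. The only cosmetic difference is that the paper applies the conditional (second) identity of Lemma~\ref{lem:VRRpathidentity} directly on $\cP^S$ via a Fubini swap with $\tilde{v}$, whereas you use the unconditional (first) identity and explicitly normalize by $\Pr[P_L \text{ valid}]=\sigma^-(S)/n^-$, which is equivalent bookkeeping.
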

\begin{proof}
	For a fixed RR path $P$, let $p(P)$ be the probability $\tilde{v} \in V(P)$.
	Then it is clear that
	\begin{equation*}
	p(P) = \E_{\tilde{v}}[\I\{\tilde{v}\in V(P)\}] = \frac{\omega(P)}{\tau}.
	\end{equation*}
	
        Let $P$ be a random VRR path generated by Algorithm \ref{alg:vdagp}.
	Then  
	\begin{align*}
	& \E_{P\sim \cP^S}[\omega(P)] = \tau \cdot \E_{P\sim \cP^S}[p(P)]
	= \tau \cdot \E_{P\sim \cP^S}[\E_{\tilde{v}}[\I\{\tilde{v}\in V(P)\}]] \\
	&= \tau \cdot \E_{\tilde{v}}[\E_{P\sim \cP^S}[\I\{\tilde{v}\in V(P)\}]] = \frac{\tau}{\sigma^-(S)} \cdot \E_{\tilde{v}}[\rho_S(\{\tilde{v}\})],
	\end{align*}
	where the last equality is by Lemma~\ref{lem:VRRpathidentity}.
\end{proof}

Finally, applying the above $\ERPN$ result to Theorem~\ref{thm:imm}, we can obtain the following

\begin{restatable}{theorem}{dagimm} \label{thm:dagimm}
	Algorithm \alg{\DAGVRRP} correctly samples a VRR path.
	The expected running time of $\AAIMM$ with \alg{\DAGVRRP} sampling is $O\left( \frac{(q_N\log{n^-}+q_E\log{m}+\ell\log{n^-})\cdot \tau \cdot \E_{\tilde{v}}[\rho_S(\{\tilde{v}\})]}{\OPT\cdot \varepsilon^2} \right)$,
    where $\tau = O(\sum_{u\in V\setminus S} \log_2 d^-_u) = O(n\log n)$. 
\end{restatable}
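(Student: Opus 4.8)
The plan is to assemble this theorem from the three supporting results already in hand rather than to prove anything substantially new: correctness follows from Lemma~\ref{lem:dag}, the running-time expression follows by substituting Lemma~\ref{lem:ERPN} into the generic bound of Theorem~\ref{thm:imm}, and the bound on $\tau$ is an elementary summation.

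First, for correctness I would simply invoke Lemma~\ref{lem:dag}, which asserts that every path produced by \alg{\DAGVRRP} is distributed according to the subspace law $\cP^S$. Since \alg{\AAIMM} only requires its sampler to draw VRR paths from $\cP^S$, this is exactly the guarantee needed, so no extra argument is required here.

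Next, for the running time I would start from the general bound of Theorem~\ref{thm:imm},
\[
O\!\left( \frac{(q_N\log{n^-}+q_E\log{m}+\ell\log{n^-})\cdot \sigma^-(S)}{\OPT\cdot \varepsilon^2} \cdot \ERPN \right),
\]
which expresses the total cost as the number of VRR paths times the mean cost $\ERPN$ of generating one path. I would then substitute the closed form $\ERPN = \frac{\tau}{\sigma^-(S)} \cdot \E_{\tilde v}[\rho_S(\{\tilde v\})]$ from Lemma~\ref{lem:ERPN}. The factor $\sigma^-(S)$ in the numerator cancels against the $\sigma^-(S)$ in the denominator of $\ERPN$, leaving precisely the claimed expression. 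The one thing to check carefully here is that the one-time preprocessing used by \alg{\DAGVRRP}, namely computing all activation probabilities $\ap_v(S)$ by a single topological-order traversal of the DAG in $O(m)$ time, is dominated by, hence absorbed into, the $\theta \cdot \ERPN$ sampling cost; I would note this explicitly so the preprocessing does not appear as a separate term.

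Finally, for the bound on $\tau$, recall that a single reverse-simulation step from $u$ uses binary search over $u$'s incoming edges and thus costs $\tau(u) = O(\log_2 d_u^-)$. Summing over the $n^-$ non-seed nodes gives $\tau = O\!\bigl(\sum_{u\in V\setminus S} \log_2 d_u^-\bigr)$, and bounding each indegree by $d_u^- \le n$ yields $\log_2 d_u^- = O(\log n)$ per node and hence $\tau = O(n\log n)$ overall. I do not expect any genuine obstacle in this theorem: all the real work lives in Lemmas~\ref{lem:dag} and~\ref{lem:ERPN}, and what remains is a substitution followed by a one-line summation, the only mild subtlety being the bookkeeping of the $O(m)$ preprocessing term noted above.
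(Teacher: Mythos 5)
Your proposal is correct and follows essentially the same route as the paper: the paper gives no separate proof for this theorem, stating only that it follows by ``applying the above $\ERPN$ result to Theorem~\ref{thm:imm}'', i.e., correctness from Lemma~\ref{lem:dag}, substitution of Lemma~\ref{lem:ERPN} into Theorem~\ref{thm:imm} with the $\sigma^-(S)$ factors cancelling, and the elementary bound $\tau = O(\sum_{u\in V\setminus S}\log_2 d_u^-) = O(n\log n)$. Your explicit remark about absorbing the $O(m)$ preprocessing for computing the $\ap_v(S)$ values is a point the paper silently glosses over, and is a reasonable addition rather than a deviation.
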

Notice that when $q_N\ge 1$, we have $\E_{\tilde{v}}[\rho_S(\{\tilde{v}\})] \le \OPT$.
Therefore, in this case we will have a near-linear-time algorithm, just as the original influence maximization algorithm IMM.

The above result relies on that $G$ is a DAG.
When the original graph is not  DAG, we can transform the graph to a DAG, similar to the DAG generation algorithm in 
	the LDAG algorithm (Algorithm 3 in \cite{chen2011influence}).
The difference is that in \cite{chen2011influence} it is generating a DAG to approximate the influence towards a root $v$.
Instead, in our case we want to generate a DAG that approximates the influence from seed set $S$ to other nodes.
But the approach is similar, and we can efficiently implement this DAG generation process by a Dijkstra short-path-like algorithm just as in \cite{chen2011influence}.


\begin{figure*}[b]
	\centering
	\subfloat[DBLP ($k=300$)]{\includegraphics[width=1.24in]{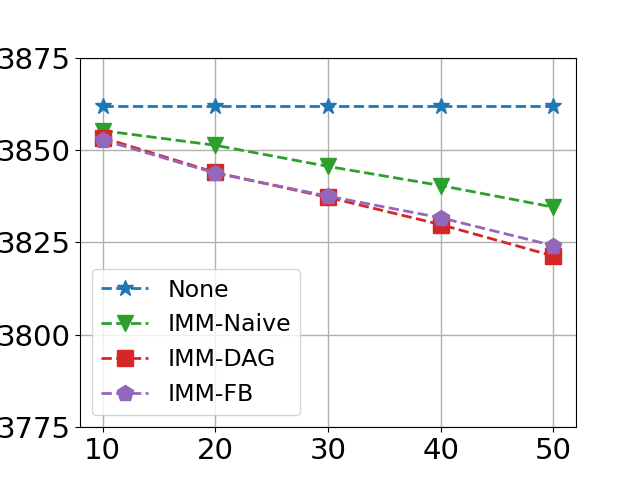}}\ \ \
	\subfloat[NetHEPT ($k=200$)]{\includegraphics[width=1.24in]{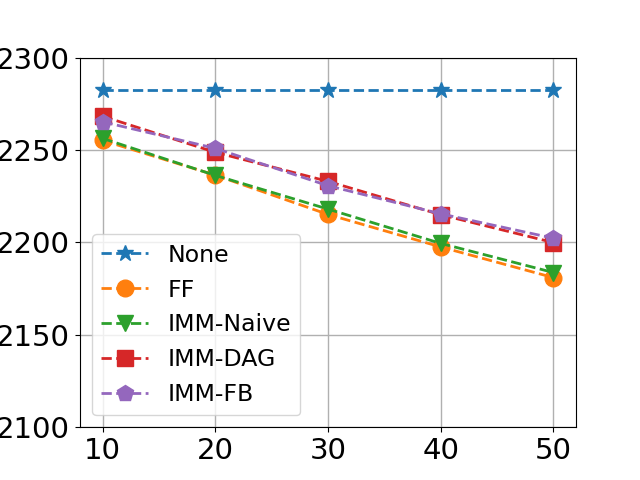}}\ \ \
	\subfloat[Flixter ($k=100$)]{\includegraphics[width=1.24in]{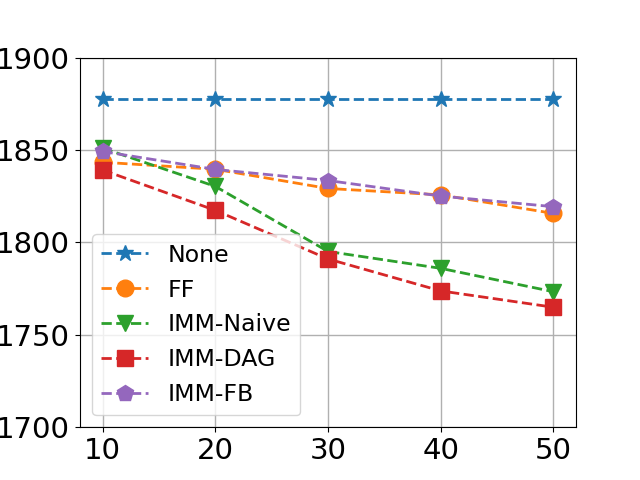}}\ \ \
	\subfloat[DM ($k=50$)]{\includegraphics[width=1.24in]{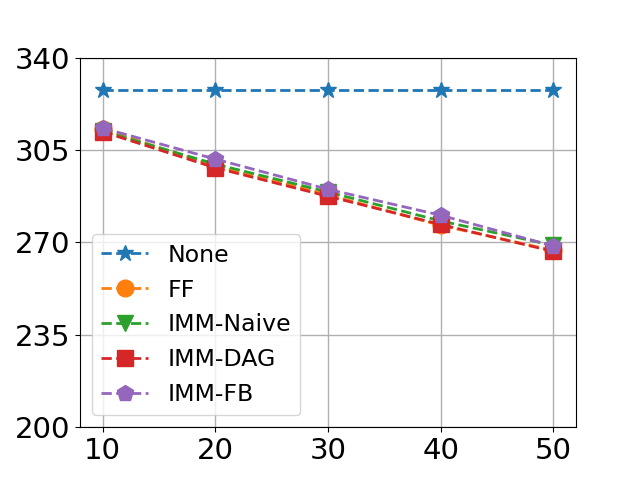}}\\
	\subfloat[DBLP ($k=300$, $q_E=10$)]{\includegraphics[width=1.24in]{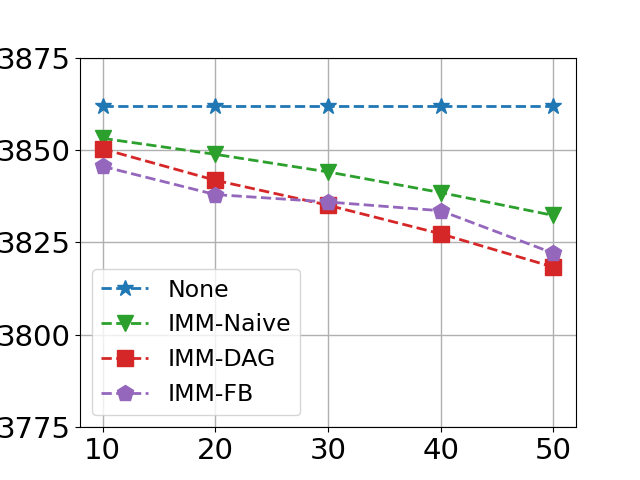}}\ \ \
	\subfloat[NetHEPT ($k=200$, $q_E=10$)]{\includegraphics[width=1.24in]{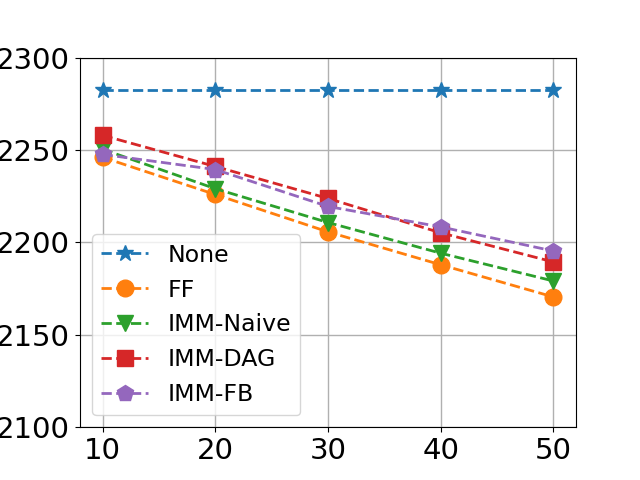}}\ \ \
	\subfloat[Flixter ($k=100$, $q_E=10$)]{\includegraphics[width=1.24in]{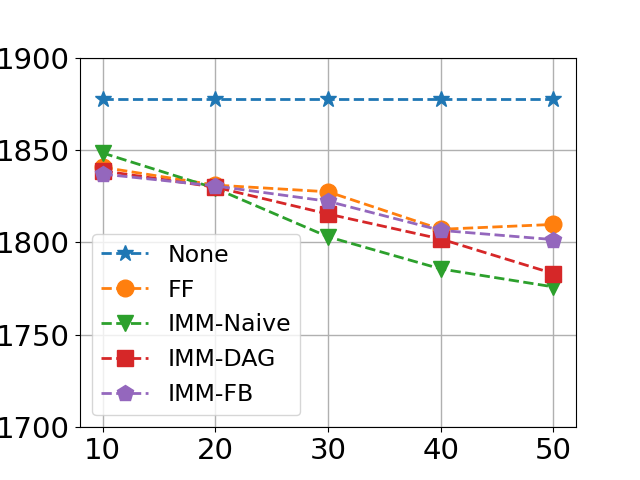}}\ \ \
	\subfloat[DM ($k=50$, $q_E=10$)]{\includegraphics[width=1.24in]{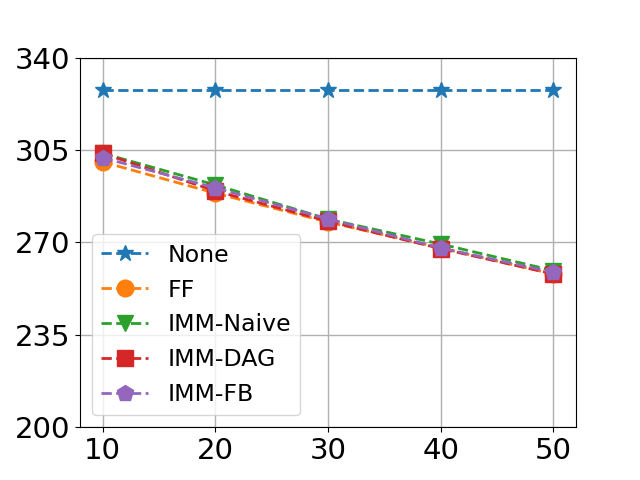}}\\
	\subfloat[DBLP ($k=300$)]{\includegraphics[width=1.24in]{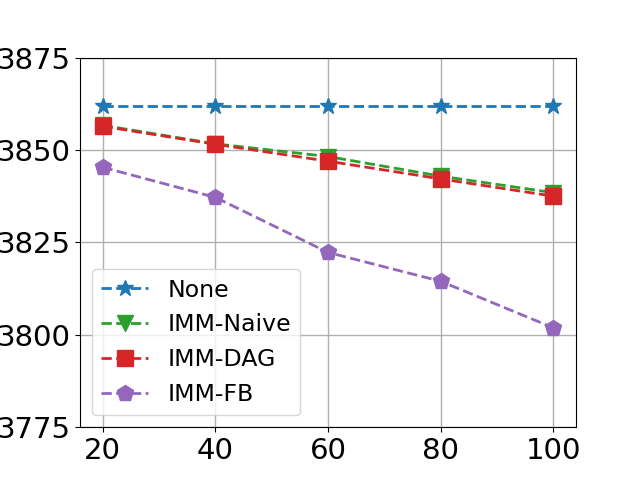}}\ \ \
	\subfloat[NetHEPT ($k=200$]{\includegraphics[width=1.24in]{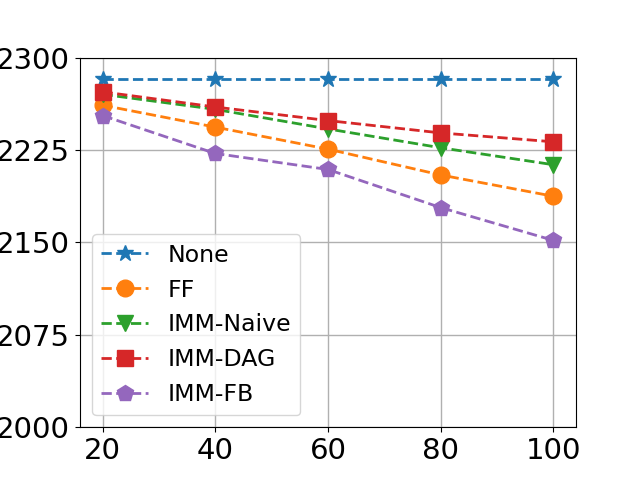}}\ \ \
	\subfloat[Flixter ($k=100$)]{\includegraphics[width=1.24in]{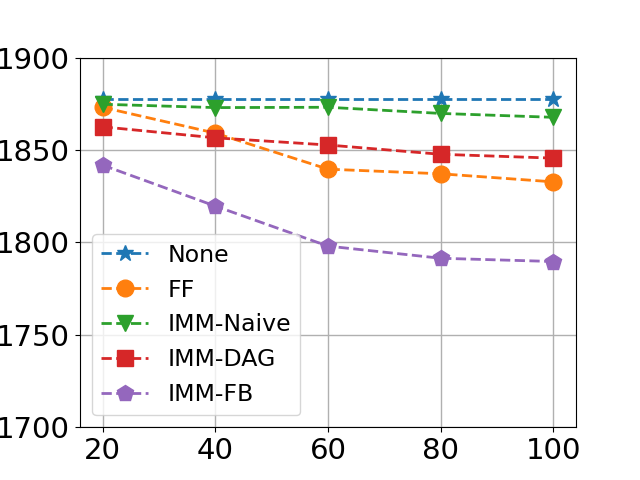}}\ \ \
	\subfloat[DM ($k=50$)]{\includegraphics[width=1.24in]{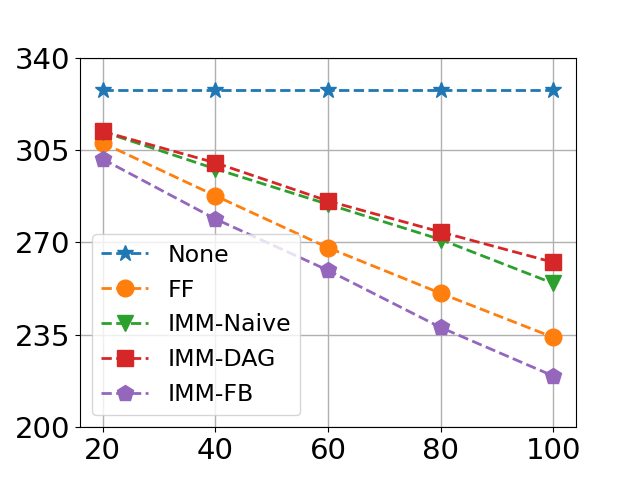}}\\
	\subfloat[DBLP ($k=300$, $q_N=20$)]{\includegraphics[width=1.24in]{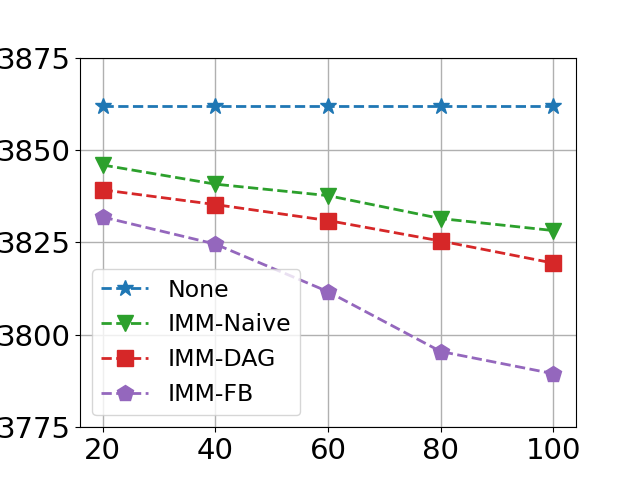}}\ \ \
	\subfloat[NetHEPT ($k=200$, $q_N=20$)]{\includegraphics[width=1.24in]{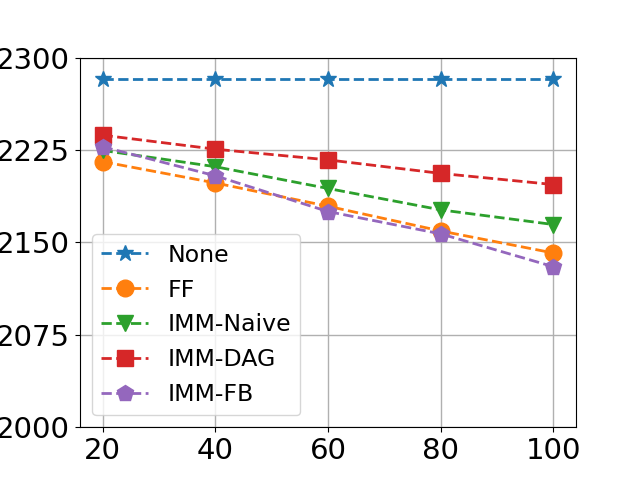}}\ \ \
	\subfloat[Flixter ($k=100$, $q_N=20$)]{\includegraphics[width=1.24in]{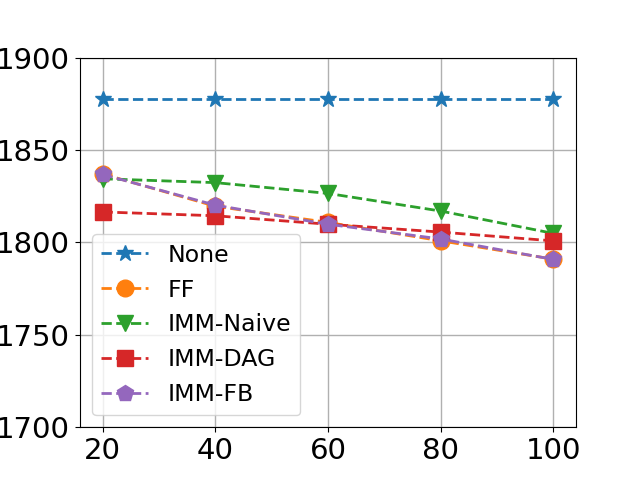}}\ \ \
	\subfloat[DM ($k=50$, $q_N=20$)]{\includegraphics[width=1.24in]{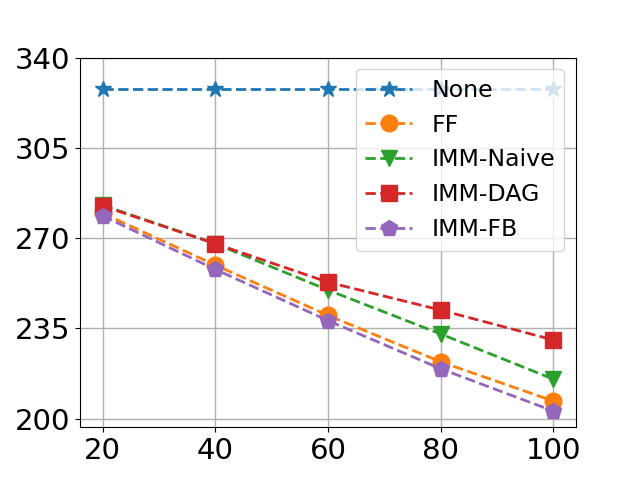}}\\
    \caption{Evaluation budget $q_N$ and $q_E$ on four datasets. We fixed $q_E$ for (a)-(h) and fixed $q_N$ for (i)-(p). 
    Once we fix the node budget, then the x-axis is the edge budget, and vice versa. The y-axis is the influence spread.} \label{fig:budget}
\end{figure*}

\section{Experimental Evaluation} \label{sec:experiment}


\subsection{Data and Algorithms}

\noindent\textbf{DBLP}. The DBLP dataset \cite{tang2009social} is a network of data mining, where every node is an author and every edge means the two authors collaborated on a paper.
The original DBLP is a graph containing \num{654628} nodes and \num{3980318} directed edges.
However, due to the limited memory of our computer (16GB memory, 1.4GHz quad-core Intel CPU), it hardly launches the whole test on such graph. So we randomly sample \num{100000} nodes and their \num{747178} directed edges from the original graph, which is still the largest size of graph among all four datasets.

\noindent\textbf{NetHEPT}. 
The NetHEPT dataset \cite{chen08} is extensively used in many influence maximization studies. 
It is an academic collaboration network from the ``High Energy Physics Theory'' section of arXiv from 1991 to 2003,
where nodes represent the authors and each edge represents one paper co-authored by two nodes. 
We clean the dataset by removing duplicated edges and obtain a directed graph $G = (V,E)$,
$|V|$ = \num{15233}, $|E|$ = \num{62774} (directed edges). 

\noindent\textbf{Flixster}.
The Flixster dataset \cite{barbieri2012topic} is a network of American social movie discovery services. 
To transform the dataset into a weighted graph, 
each user is represented by a node, 
and a directed edge from node $u$ to $v$ is formed if $v$ rates one movie shortly 
after $u$ does so on the same movie.
The Flixster graph contains \num{29357} nodes and \num{212614} directed edges.

\noindent\textbf{DM}. The DM dataset \cite{tang2009social} is a network of data mining researchers extracted from the ArnetMiner archive (aminer.org), where nodes present the researchers and each edge is the paper coauthorship between any two researchers.
DM is the small size dataset here, which only includes \num{679} nodes and \num{3374} directed edges.

\subsection{Algorithms}

We test all four algorithms proposed in the experiment, for the AdvIM task with different settings.
Some further details of each algorithm are explained below.

\alg{\AAFF}. This is the forward forest greedy algorithm. 
The number of forward-forest simulations of \alg{\AAFF} is the same as the reverse-reachable approaches in Theorem \ref{thm:naiveimm}. However, instead of using VRR paths, \alg{\AAFF} simulate a forward forest, i.e., a set of trees, in each propagation, which is required to occupy a huge computer memory to ensure theoretical guarantee.
To make it practical, we follow the standard practice in the literature and set the number of simulations as $10000$~\cite{kempe03,chen08,WCW12}. 
\OnlyInShort{The pseudo code and full analysis of \alg{\AAFF} is given in the appendix of the full version~\cite{arxivfull}.}

\alg{\AAIMMNaive}. This is the AAIMM algorithm with naive VRR path simulation, as given in Algorithm~\ref{alg:imm} and Algorithm~\ref{alg:nvrrp}. 
\alg{\AAIMMNaive} needs to generate plenty of RR paths for enough VRR paths since most naive RR paths can not touch the seed set $S$.
Compared to \alg{\AAFF}, \alg{\AAIMMNaive} uses much less computer memory cost to ensure the theoretical guarantee due to the RIS approach.

\alg{\AAIMMFB}.
This is the AAIMM algorithm with forward-backward VRR path simulation, as given in Algorithm~\ref{alg:imm} and Algorithm~\ref{alg:fbvrrp}.
Unlike \alg{\AAIMMNaive}, no path will be wasted in \alg{\AAIMMFB}, since all paths here are VRR paths that are randomly selected from the forward forests.
To fair compare the running time with \alg{\AAFF}, We choose to sample the same number of simulations of the forward forests.
The results show that compared with \alg{\AAFF}, \alg{\AAIMMFB} can save more computer memory and computation power.

\alg{\AAIMMDAG}.
This is the DAG-based AAIMM algorithm, as given in Algorithm~\ref{alg:imm} and Algorithm~\ref{alg:vdagp}.
Before simulation the VRR paths from DAG, we need first create a DAG as same as \cite{chen2011influence}.
Compared to previous RIS approaches, \alg{\AAIMMDAG} is the fastest approach for VRR path sampling.

We also use $10000$ Monte Carlo simulations for influence spread estimation after adversarial attacks for all the above approaches. 

\subsection{Result}

We test all four algorithms proposed in the experiment, for the Adv-IM with $k$ seeds budget, such as \alg{\AAFF}, \alg{\AAIMMNaive}, \alg{\AAIMMFB} and \alg{\AAIMMDAG} algorithms.
We use $\cR=10000$ for the \alg{\AAFF} and \alg{\AAIMMFB} on all datasets due to the high-cost computing resource and memory usage.
Note that, due to the high memory cost, \alg{\AAFF} can not finish the whole test even $\cR = 10000$.
In all tests, we set the seed set $k = 50, 100, 200, 300$ for DM, Flixster, NetHEPT and DBLP respectively, and we also test different combinations of $q_N$ and $q_E$.
For clear representation, we leave out "{\sf AA-}" in the algorithms' names in Figure \ref{fig:budget} and Figure \ref{fig:time}.

\paragraph{Influence Spread Performance.}
From Figure \ref{fig:budget}, it is not hard to see that the influence of the selected seed set is decreasing while increasing either the node budget $q_N$ or the edge budget $q_E$ over all 16 tasks.
On DBLP, we can see that \alg{\AAIMMFB} > \alg{\AAIMMDAG} > \alg{\AAIMMDAG} based on the performance on influence reduction, and \alg{\AAFF} can not finish the test with $\cR = 10000$.
NetHEPT has similar ranking results as DBLP that we have \alg{\AAIMMFB} = \alg{\AAFF} > \alg{\AAIMMNaive} > \alg{\AAIMMDAG} in most cases.
However, on Flixster, the ranking becomes very different. \alg{\AAIMMDAG} becomes the best in most cases, and \alg{\AAFF} and \alg{\AAIMMFB} show the worst influence reduction performance in Figures \ref{fig:budget} (c) and (g).
On DM, all methods perform close to each other.
In summary, the results demonstrate that all algorithms perform close to each other. In general, either \alg{\AAIMMDAG} and \alg{\AAFF} can achieve the best performance in different tasks.
In this case, the running time becomes the key while applying the algorithms in real applications.

\begin{figure}[h]
	\centering
	\subfloat[DBLP ($k=200$, $q_N=30$)]{\includegraphics[width=1.35in]{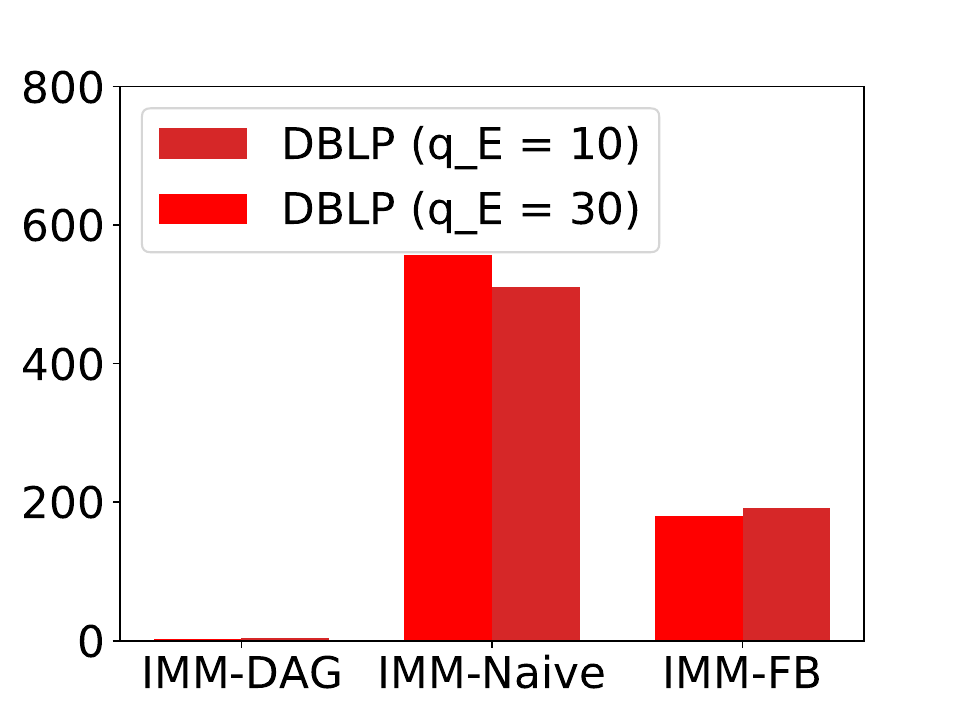}}\ \ \
	\subfloat[NetHEPT ($k=200$, $q_N=30$)]{\includegraphics[width=1.35in]{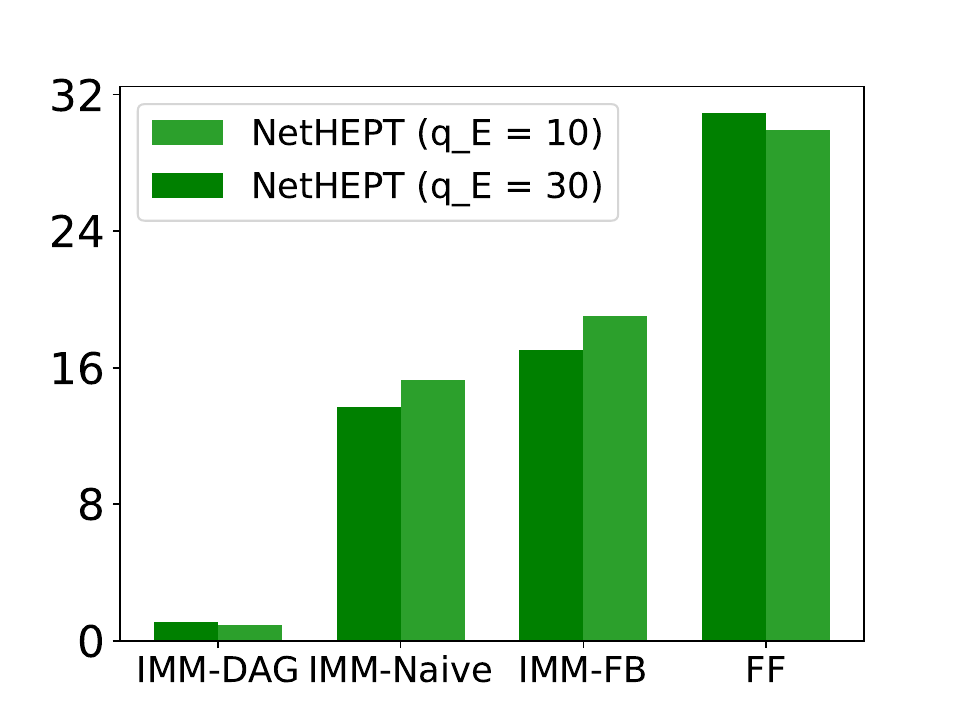}}\ \ \
	\subfloat[Flixter ($k=100$, $q_N=30$)]{\includegraphics[width=1.35in]{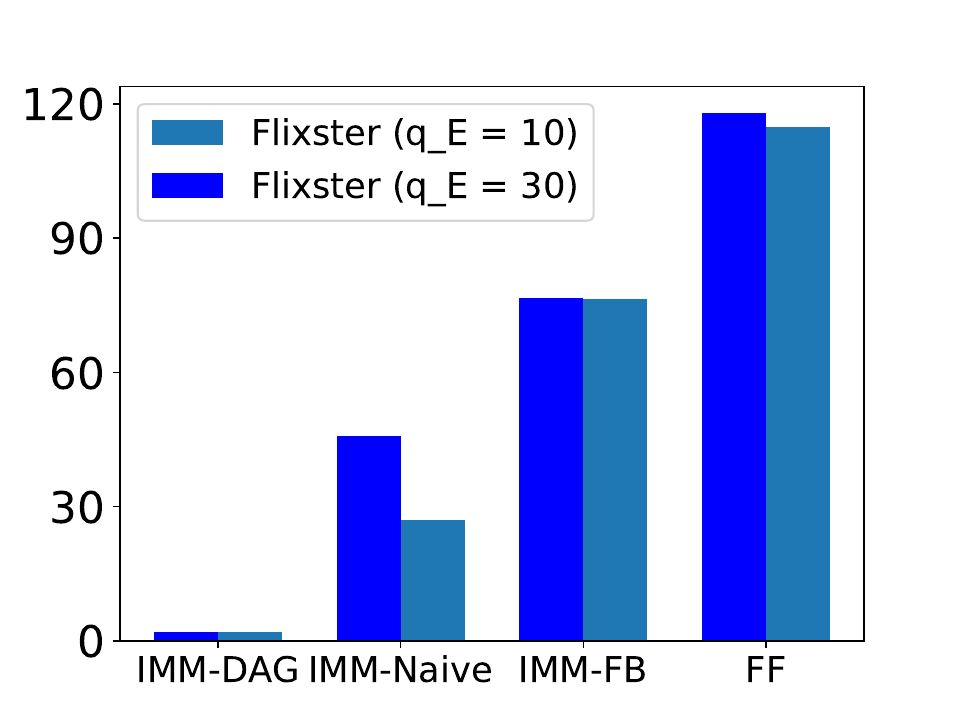}}\ \ \
	\subfloat[DM ($k=50$, $q_N=30$)]{\includegraphics[width=1.35in]{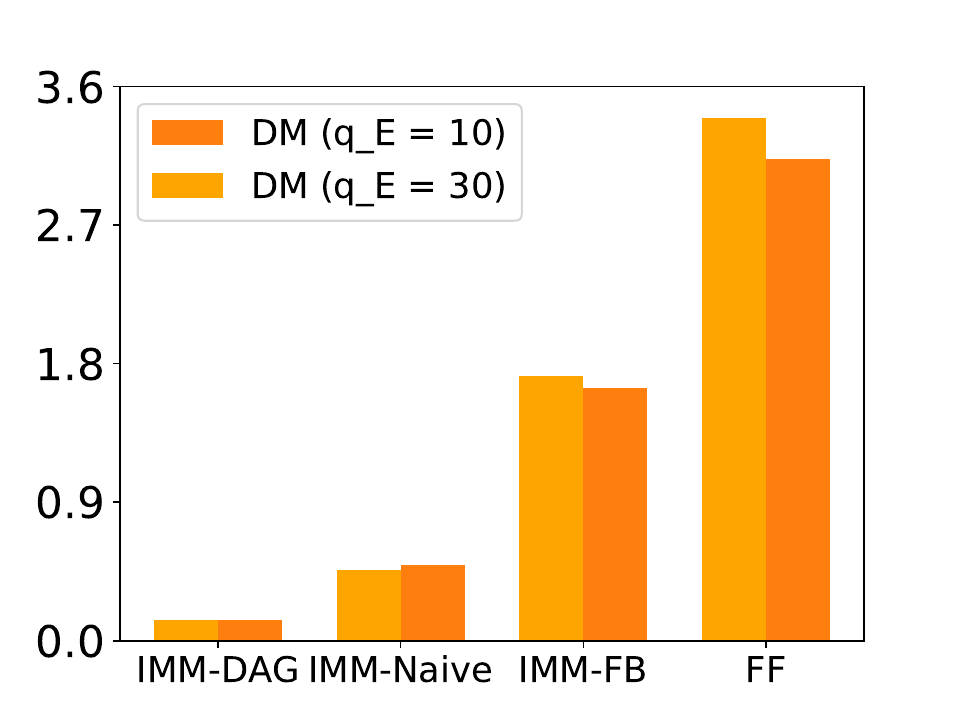}}\\
	\caption{Running time analysis. For all datasets, the node budget is fixed as $30$. The y-axis is the running time (seconds).} \label{fig:time}
	\vspace{-10pt}
\end{figure}

\paragraph{Running time.}
Figure \ref{fig:time} reports the running time of all the tested algorithms on the four datasets.
One clear conclusion is that all IMM algorithms are much more efficient than \alg{\AAFF} that even fails to finish the test on DBLP.
Among three RIS algorithms, it is obvious that \alg{\AAIMMDAG} is the fastest algorithm, \alg{\AAIMMNaive} is the second, and the \alg{\AAIMMFB} is the slowest one.
However, from DBLP's results, \alg{\AAIMMNaive} is slower than \alg{\AAIMMFB}, which is not  due to the limited number of simulations of the \alg{\AAIMMFB}, i.e., $\cR = 10000$.
More importantly, \alg{\AAIMMDAG} is at least 10 times faster than all other algorithms, including other RIS-based approaches.
After combining the influence spread and running time performance together, we recommend \alg{\AAIMMDAG} and \alg{\AAIMMFB} algorithms to be the best two choices for AdvIM, and \alg{\AAFF} to be the worst choice due to the cost of high computer memory and high running time.

\subsection{Discussion}
From the experiments, we found several interesting aspects. 
The most important is why \alg{\AAFF} algorithm took so much memory and computing resource compared to VRR path simulation approaches.
The primary reason is that \alg{\AAFF} algorithm takes too much memory space. For RR path simulation, no matter how big the original seed set is, we only save one single path per simulation.
However, in each \alg{\AAFF} simulation, the size of the forest is related not only to the graph's proprieties but also to the size of the target seed set.
While the target seed set contains 100 nodes, there are 100 sub-tree in each simulation by the \alg{\AAFF} algorithm.
It means \alg{\AAFF} can not be practical in a real application. For example, if we want to stop COVID-19 with a virus spread graph, the seed sets may be thousands in a vast network.
In this case, VRR-based path simulation is the best for a large graph simulation for Adv-IM.
	
%
%

\section{Conclusion} \label{sec:conclude}
In this paper, we study the adversarial attack on influence maximization (AdvIM) task and propose efficient algorithms for solving AdvIM problems.
We adapt the RIS approach to improve the efficiency for the AdvIM task.
The experimental results demonstrate that our algorithms are more effective and efficient than previous approaches.
There are several future directions from this research. One direction is to attack the influence maximization on uncertainty networks or dynamic networks.
We can also study blocking the influence propagation without knowing the seed set. 
Another direction is to sample less number of forward forests with theoretical analysis.
Adversarial attack on other influence propagation models may also be explored.

\newpage
\bibliographystyle{ACM-Reference-Format}
\bibliography{sample-base}

\newpage
\OnlyInFull{
\newpage
\newpage
\appendix

\section*{Appendix}

\section{Forward Forest Algorithm} \label{app:ff}

In this section, we provide a greedy approach with theoretical guarantees by estimating the forward forest simulation, as given in Algorithm \ref{alg:aaff}.
Unlike the naive greedy approaches with Monte Carlo simulations, we further provide the new theoretical analysis based on IMM for \alg{\AAFF}, which can give a more precise and less number of simulations.
Algorithm \ref{alg:aaff} has two phases too. In Phase 1, we generate $\theta$ forward forests $\cF$, where $\theta$ is computed to guarantee the approximation with high probability.
In Phase 2, we use the greedy algorithm to find the $q_N$ nodes and $q_E$ edges that cover the most number of forest forests.
Phase 1 here also follows the IMM structure to estimate a lower bound of the optimal value, which is used to determine the number of the forward forests needed for the theoretical guarantee.
Unlike all RIS approaches, \alg{\AAFF} samples a forest started from the seed set $S$ per time.
In order to save the forward forests, it requires a huge memory compared to the RIS approaches.
So, in practice, we simulate a small number that is far away from the required number to ensure the theoretical guarantee, as shown below.



\begin{restatable}{theorem}{aagreedy} \label{thm:aagreedy}
	Let $A^*$ be the optimal solution of the AdvIM. For every $\varepsilon > 0$ and $\ell > 0$, with probability at least
	$1-\frac{1}{n^\ell}$, the output $A^o$ of the Forward Forest Simulation Adversarial Attacks algorithm \alg{\AAFF}
	satisfies
	\begin{equation*}
	\rho_S(A^o) \geq \left(\frac{1}{2} - \varepsilon\right) \rho_S(A^*),
	\end{equation*}
	In this case, the total running time for \alg{\AAFF} is 
	\begin{align}
	O\left(\frac{(q_N\log{n^-}+q_E\log{m}+\ell\log{n^-})\cdot n}{\OPT\cdot \varepsilon^2} \cdot (\EFF(S) + q \cdot \EFD(S) )\right). \label{eq:FFStime}
	\end{align}
\end{restatable}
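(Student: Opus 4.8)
The plan is to mirror the analysis of Theorem~\ref{thm:imm}, replacing the VRR-path sampling primitive by the forward-forest primitive while keeping the IMM two-phase skeleton intact. The conceptual bridge is a correspondence between a forward forest and the family of VRR paths it contains: in a live-edge graph $L$, the forward forest $F$ grown from $S$ reaches exactly the set of non-seed nodes influenced by $S$, and for each such node $v$ the unique $S$-to-$v$ branch of $F$ is precisely the valid RR path $P_{L,v}^S$. Consequently, for any attack set $A$ the count $F(A) := \sum_{v\in V\setminus S}\I\{A\cap VE(P_{L,v}^S)\ne\emptyset\}$ of forest nodes whose $S$-branch is severed by $A$ is an unbiased estimator of $\rho_S(A)$; summing the first identity of Lemma~\ref{lem:VRRpathidentity} over all roots $v$ gives $\E_L[F(A)] = \rho_S(A)$. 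First I would establish this unbiasedness formally, and observe that each indicator $\I\{A\cap VE(P_{L,v}^S)\ne\emptyset\}$ is a (monotone submodular) coverage function of $A$, so that the empirical objective $F_\cF(A)$ averaged over the sampled forests $\cF$ is itself monotone submodular --- exactly the structure the greedy phase needs.

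For the approximation guarantee I would then reuse the IMM machinery at the level of the estimator. The two-phase structure first produces a high-probability lower bound on $\OPT$ and uses it to fix the number $\theta$ of forests, and Phase~2 runs greedy max-coverage of $A$ against $\cF$. Because the node and edge budgets are separate, the feasible region $(\cA,\{A:|A_N|\le q_N,|A_E|\le q_E\})$ is a partition matroid, and greedy on a monotone submodular function yields a $1/2$-approximation by~\cite{fisher1978analysis}; combining this with the $(1\pm\varepsilon')$ concentration of $F_\cF$ around $\rho_S$ for the selected and the optimal sets delivers $\rho_S(A^o)\ge(1/2-\varepsilon)\rho_S(A^*)$ with probability $1-1/n^\ell$, just as in Theorem~\ref{thm:imm}.

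The running-time bound is where the forest primitive differs quantitatively from VRR paths, and I would derive it in two factors. The sample-complexity factor changes because the forest estimator $F(A)$ is bounded in $[0,n^-]$ rather than the $[0,1]$ range of a single VRR-path indicator; carrying this range through the same concentration used for Lemma~\ref{lem:sampling} replaces the $\sigma^-(S)$ appearing in the VRR-path count by $n$, giving $\theta = O\!\big((q_N\log n^- + q_E\log m + \ell\log n^-)\,n/(\OPT\,\varepsilon^2)\big)$ forests. The per-forest cost factor is $\EFF(S)$ to grow one forest in Phase~1 plus, amortized into each forest, the cost of the greedy evaluations: across the $q=q_N+q_E$ greedy steps each forest must be re-probed for the marginal number of newly severed nodes, at cost $\EFD(S)$ per step, contributing $q\cdot\EFD(S)$. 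Multiplying $\theta$ by $\EFF(S)+q\cdot\EFD(S)$ yields Eq.~\eqref{eq:FFStime}.

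The main obstacle I anticipate is the concentration step rather than the matroid/greedy part: unlike a VRR-path indicator, the forest estimator sums highly positively-correlated indicators (severing one high edge cuts an entire downstream subtree at once), so its range, not a benign variance, governs the sample size. I would need to argue that the $[0,n^-]$ range bound is the right quantity to feed into the IMM lower-bound estimation and the stopping rule, and verify that the resulting $\theta$ still certifies the lower bound on $\OPT$ with the claimed failure probability $1/n^\ell$; this is the place where the proof must deviate most carefully from the VRR-path template.
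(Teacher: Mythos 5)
Your proposal follows essentially the same route as the paper's proof: both treat each forward forest as a single aggregate sample, invoke the IMM-style concentration machinery (a condition lemma combined with one-sided Chernoff bounds and a union bound over the $\binom{n^-}{q_N}\binom{m}{q_E}$ candidate attack sets) to fix the number $\theta$ of forests, apply the partition-matroid greedy guarantee of \cite{fisher1978analysis} to obtain the $1/2-\varepsilon$ factor with probability $1-1/n^{\ell}$, and bound the running time as the forest count times the per-forest cost $\EFF(S)+q\cdot\EFD(S)$. Your explicit argument that the $[0,n^-]$ range of the per-forest estimator (rather than the $\{0,1\}$ range of a single VRR-path indicator) is what forces the factor $n$ in place of $\sigma^-(S)$ is exactly right, and is in fact cleaner than the paper's own write-up, which proves the $n$-dependent sample bound in its approximation analysis but then inconsistently writes $\sigma^-(S)$ when counting forests in the running-time part.
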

\begin{proof}
	First, based on the definition of the $\hat{\rho}(S, \cL)$, we can define the influence deduction function $\rho_S(A)$.
	Because of $|\Gamma(L_i, S) - \Gamma(L_i \setminus A, S)|$ is monotone as a function of $A$, so $\rho_S(A)$ is also monotone.
	In this case, we can apply Lemma \ref{lemma:condition} to ensure the correctness of the \alg{\AAFF} algorithm.
	To ensure the first two conditions of Lemma \ref{lemma:condition},
	we need to generate enough live-edge graphs,
	and the number of the graphs is referred by Lemma \ref{lemma:nonbias}.
	However, Lemma \ref{lemma:nonbias} is two-sided concentration inequality based on Fact \ref{fact:Chernoff}.
	In fact, Lemma \ref{lemma:condition} only need one-sided concentration inequality based on Fact \ref{fact:Chernoff2}.
	So, we can use Lemma \ref{lemma:numtheta} and Fact \ref{fact:Chernoff2}, and we have $\delta_1= \delta_2 = \frac{1}{2n^\ell}$.
	Now, we have a conclusion: let $\alpha = \sqrt{\ell\ln n^- + \ln 2}, \beta = \sqrt{1/2 \cdot(\ln(\binom{n^-}{q_N}\binom{m}{q_E}) + \ell \ln n^-+ ln 2)}$,
	then while $\theta \geq 3n(1/2\cdot \alpha + \beta)^2/(OPT \cdot \varepsilon^2)$,
	two conditions of Lemma \ref{lemma:condition} satisfies, and we have
	\alg{FS-Adv-Greedy} returns $A^o(\cL)$ with at least $1 - \frac{1}{n^\ell}$ probability satisfying $\rho_S(A^o) \geq (\frac{1}{2}-\varepsilon)\rho^*_S(A^*)$.
	
	Next, we prove the time complexity of the \alg{\AAFF}.
	Similar to the above argument, we need 
	\begin{align*}
	O\left( \frac{(q_N\log{n^-}+q_E\log{m}+\ell\log{n^-})\cdot \sigma^-(S)}{\OPT\cdot \varepsilon^2} \right). 
	\end{align*}
	number of forward forests, where $\OPT$ is the optimal solution of the influence reduction problem.
	
	Let $\EFF(S)$ be the mean time of generating a forward forest from seed set $S$, 
	and $\EFD(S)$ be the mean depth of a forward forest from $S$ (depth of a forest is the length of the
	longest path in a forest, and the expectation is taking over the randomness of the forest).
	According the discussion above, average generation of one forest takes $\EFF(S)$ time, and
	update on a forest would take $O(\EFF(S)+q \cdot \EFD(S))$ time, so the total running time is 
	\begin{align*}
	O\left( \frac{(q_N\log{n^-}+q_E\log{m}+\ell)\cdot \sigma^-(S)}{\OPT\cdot \varepsilon^2} \cdot (\EFF(S) + q \cdot \EFD(S) )\right). 
	\end{align*}
	
	Note that $\EFF(S) + q \cdot \EFD(S)$ is dynamic, which depends on the seed set $S$ and the properties of graph $G$.

\end{proof}

\begin{algorithm}[t] 
	\caption{{\AAFF}: Adversarial Attack Forward-Forest Greedy Algorithm
	} \label{alg:aaff}
	\KwIn{Graph $G=(V,E)$, LT influence weights $\{w(u,v)\}_{(u,v)\in E}$, budget $q_N$ for nodes and $q_E$ for edges, 
		seed set $S$
	}
	\KwOut{attack set $A$}
	
	Initialization: attack set $A\leftarrow \emptyset$, forest set $\cF\leftarrow \emptyset$, influence score array $I(a) \leftarrow 0$, for all $a\in V\cup E$\;
	\For{i=1 to $\theta$}{
		Forward propagate a new forest $F_i$ from seed set $S$ 
		by using LT model in $G$ \;
		$\cF \leftarrow \cF \cup F_i$\;
		Calculate the influence score $I_i(a)$ of each node or edge $z$ in $F_i$, Using DFS on $F_i$\;
		Update $I$: for all $a$ in $F_i$, $I(a)\leftarrow I(a) + I_i(a)$;
	}
	\While{$q_N+q_E >0$}{
		\If{$q_E == 0$}{$a \leftarrow  \argmax_{a \in V} I(a)$;}
		\uElseIf{$q_N == 0$}{$a \leftarrow \argmax_{a \in E} I(a)$;}
		\Else{$a \leftarrow  \argmax_{a\in V\cup E} I(a)$;}
		\If{$a \in V$}{$q_N \leftarrow q_N -1$\;}
		\Else{$q_E \leftarrow q_E -1$\;}
		\For{each forest $F_i \in \cF$ containing $a$}{
			Do DFS on $F_i$ starting from $a$, for each node or edge $z$ traversed, 
			calculate its score $I_i(z)$ in $F_i$, update $I(z) \leftarrow I(z) - I_i(z)$\;
			Remove subtree rooted at $a$ from $F_i$\;
			For all ancestor node or edge $z$ of $a$ in $F_i$, update $I(z) \leftarrow I(z) - I_i(a)$\;
		}
		$A \leftarrow A \cup a$\;
	}		
	\bf{return} $A$.
\end{algorithm}
\label{app:proof}

\section{Proof - Supplementary} \label{app:other}



\begin{restatable}{fact}{Chernoff} \label{fact:Chernoff}
	(Chernoff bound) Let $X_1, X_2, \ldots, X_R$ be $R$ independent random variables and with $X_i$ having range [0, 1]. Let $X = \sum_{i=1}^R X_i / R$, for any $0 < \gamma < 1$, we have
	\begin{equation*}
	\Pr\{|X - \E[X| | \geq \gamma \cdot \E[X]\} \leq 2 \exp(-\frac{\gamma^2\E[X]}{3}).
	\end{equation*}
\end{restatable}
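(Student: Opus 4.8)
The plan is to reduce the stated average-form inequality to the classical two-sided multiplicative Chernoff bound on the sum, and then invoke the standard moment-generating-function (MGF) argument. First I would set $Y = \sum_{i=1}^R X_i = R\,X$ and $\mu = \E[Y] = R\,\E[X]$. Dividing through by $R$ shows that the event $\{|X - \E[X]| \ge \gamma\,\E[X]\}$ coincides with $\{|Y - \mu| \ge \gamma\,\mu\}$, so it suffices to bound $\Pr\{|Y-\mu| \ge \gamma\mu\}$ and then translate back. Working with the sum $Y$ is cleaner because its summands $X_i$ already lie in $[0,1]$.

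For the upper tail I would apply Markov's inequality to $e^{tY}$ with $t>0$, giving $\Pr\{Y \ge (1+\gamma)\mu\} \le \E[e^{tY}]\,e^{-t(1+\gamma)\mu}$. By independence and the chord bound $e^{tx} \le 1 + x(e^t-1)$ on $x\in[0,1]$, together with $1+a\le e^a$, each factor satisfies $\E[e^{tX_i}] \le \exp(\E[X_i](e^t-1))$, so $\E[e^{tY}] \le \exp(\mu(e^t-1))$. Optimizing with $t = \ln(1+\gamma)$ yields $\Pr\{Y \ge (1+\gamma)\mu\} \le \left(e^{\gamma}/(1+\gamma)^{1+\gamma}\right)^{\mu}$. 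The lower tail is symmetric: applying Markov to $e^{-tY}$ and optimizing with $t = -\ln(1-\gamma)>0$ gives $\Pr\{Y \le (1-\gamma)\mu\} \le \left(e^{-\gamma}/(1-\gamma)^{1-\gamma}\right)^{\mu}$.

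Next I would convert both optimized expressions into Gaussian-type tails. Taking logarithms, the scalar inequalities $\gamma - (1+\gamma)\ln(1+\gamma) \le -\gamma^2/3$ and $-\gamma - (1-\gamma)\ln(1-\gamma) \le -\gamma^2/2$ for $0 < \gamma < 1$ give $\Pr\{Y \ge (1+\gamma)\mu\} \le e^{-\gamma^2\mu/3}$ and $\Pr\{Y \le (1-\gamma)\mu\} \le e^{-\gamma^2\mu/2}$. A union bound over the two tails, noting that the lower-tail exponent is the larger $1/2$, yields $\Pr\{|Y-\mu| \ge \gamma\mu\} \le 2e^{-\gamma^2\mu/3}$. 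Substituting $\mu = R\,\E[X]$ and using $R \ge 1$ to pass from $e^{-\gamma^2 R\,\E[X]/3}$ to the weaker $e^{-\gamma^2\E[X]/3}$ gives exactly the claimed bound (indeed with slack, since the stated exponent drops the factor $R$).

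The result is entirely standard, so there is no genuine obstacle. The only mildly delicate ingredients are the two scalar log-inequalities that control the optimized MGF ratios; these are verified by a routine monotonicity check of $f(\gamma) = \gamma - (1+\gamma)\ln(1+\gamma) + \gamma^2/3$ (and its lower-tail analogue) on $0 < \gamma < 1$, which I would dispatch by a one-line calculus argument or simply cite a standard reference rather than grind through.
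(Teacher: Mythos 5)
Your proof is correct, and since the paper states this Chernoff bound as a \emph{Fact} with no proof of its own (it is the standard two-sided multiplicative Chernoff bound, implicitly imported from the literature), there is no in-paper argument to compare against; your MGF derivation is the canonical route. Two points worth noting. First, you correctly identified the one genuinely delicate feature of the statement: with $X = \sum_{i=1}^R X_i / R$ defined as an \emph{average}, the standard bound gives the exponent $-\gamma^2 R\,\E[X]/3$, whereas the Fact as printed has $-\gamma^2\E[X]/3$; your observation that the printed version follows with slack from $R \ge 1$ is exactly right, and it is the correct way to certify the statement as literally written. (In fact, the paper's own application in the proof of its Lemma on unbiased estimation takes $X$ to be the \emph{sum} $\sum_{i=1}^R X_i$ and uses the exponent $-\gamma^2 R \cdot \rho_S(A)/3$, which strongly suggests the exponent in the stated Fact is a typo for the $R$-bearing form — your proof covers both readings.) Second, your scalar inequalities check out: for $0<\gamma<1$, $(1+\gamma)\ln(1+\gamma) - \gamma \ge \gamma^2/3$ follows since its derivative $\ln(1+\gamma) - 2\gamma/3$ is nonnegative on $(0,1]$ (it vanishes at $0$, increases up to $\gamma = 1/2$, and $\ln 2 - 2/3 > 0$ at $\gamma=1$), and $(1-\gamma)\ln(1-\gamma) + \gamma \ge \gamma^2/2$ follows from $-\ln(1-\gamma) \ge \gamma$; so the "cite or one-line calculus" step you deferred is indeed routine.
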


\begin{restatable}{fact}{Chernoff2} \label{fact:Chernoff2}
	(Chernoff bound) Let $X_1, X_2, \ldots, X_R$ be $R$ independent random variables and with $X_i$ having range [0, 1]. and there exists $\mu \in [0, 1]$ making $\E[X_i] = \mu$ for any $i \in [R]$. Let $Y = \sum_{i=1}^R X_i$, for any $\gamma> 0$, 
	\begin{equation*}
	\Pr\{Y - t\mu \geq \gamma \cdot t\mu\} \leq \exp(-\frac{\gamma^2}{2 + \frac{2}{3}\gamma}t\mu).
	\end{equation*}
	For any $0 < \gamma < 1$, 
	\begin{equation*}
	\Pr\{Y - t\mu \leq - \gamma \cdot t\mu\} \leq \exp(-\frac{\gamma^2}{2}t\mu).
	\end{equation*}
\end{restatable}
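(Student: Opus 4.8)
The plan is to prove both tail bounds of \textbf{Fact~\ref{fact:Chernoff2}} by the standard exponential-moment (Chernoff) method, since the statement is just the multiplicative Chernoff bound specialized to summands with a common mean $\mu$ and $\E[Y]=t\mu$ (here $t=R$). First I would handle the upper tail. Fix $s>0$ and apply Markov's inequality to the nonnegative variable $e^{sY}$, giving $\Pr\{Y\ge (1+\gamma)t\mu\}\le e^{-s(1+\gamma)t\mu}\,\E[e^{sY}]$. Using independence, $\E[e^{sY}]=\prod_{i=1}^{R} \E[e^{sX_i}]$, and for each factor I would exploit convexity of $x\mapsto e^{sx}$ on $[0,1]$ to get $e^{sx}\le 1+(e^s-1)x$, hence $\E[e^{sX_i}]\le 1+(e^s-1)\mu\le \exp((e^s-1)\mu)$ via $1+y\le e^y$. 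Taking the product yields $\E[e^{sY}]\le \exp((e^s-1)t\mu)$, so the tail is at most $\exp((e^s-1-s(1+\gamma))\,t\mu)$.

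The second step is to optimize the free parameter. Minimizing the exponent over $s>0$ gives $s=\ln(1+\gamma)$, which produces the classical form $\Pr\{Y\ge(1+\gamma)t\mu\}\le \exp((\gamma-(1+\gamma)\ln(1+\gamma))\,t\mu)$. For the lower tail I would mirror the argument with $e^{-sY}$ and $s>0$: Markov gives $\Pr\{Y\le (1-\gamma)t\mu\}\le e^{s(1-\gamma)t\mu}\,\E[e^{-sY}]$, the same convexity bound gives $\E[e^{-sX_i}]\le \exp((e^{-s}-1)\mu)$, and optimizing at $s=-\ln(1-\gamma)$ (valid since $0<\gamma<1$) yields $\Pr\{Y\le(1-\gamma)t\mu\}\le \exp((-(1-\gamma)\ln(1-\gamma)-\gamma)\,t\mu)$.

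Finally I would convert these raw bounds into the clean exponents stated in the fact. This reduces to two elementary scalar inequalities: $(1+\gamma)\ln(1+\gamma)-\gamma \ge \gamma^2/(2+\tfrac{2}{3}\gamma)$ for all $\gamma>0$, and $(1-\gamma)\ln(1-\gamma)+\gamma\ge \gamma^2/2$ for $0<\gamma<1$. Substituting each into the exponents above gives exactly the right-hand sides $\exp(-\frac{\gamma^2}{2+\frac{2}{3}\gamma}t\mu)$ and $\exp(-\frac{\gamma^2}{2}t\mu)$. The main obstacle is precisely verifying these two logarithmic inequalities; I expect to handle them by comparing Taylor expansions of $\ln(1\pm\gamma)$ and bounding the resulting alternating/monotone series, or equivalently by showing each difference function vanishes to second order at $\gamma=0$ and has a nonnegative derivative thereafter. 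The probabilistic part is routine; essentially all the real work sits in these deterministic one-variable estimates.
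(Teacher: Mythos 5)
Your proposal is correct, but there is nothing in the paper to compare it against: the paper states this result as a \emph{Fact} (a standard one-sided Chernoff/Bernstein bound, of the form routinely invoked in the RIS/IMM literature) and gives no proof of it anywhere, so you have supplied an argument where the authors simply cite folklore. Your route is the classical exponential-moment method and every step is sound: Markov's inequality applied to $e^{sY}$ (resp.\ $e^{-sY}$), independence to factor the moment generating function, the convexity bound $e^{sx}\le 1+(e^s-1)x$ on $[0,1]$ together with $1+y\le e^y$ and $\E[X_i]=\mu$, and optimization at $s=\ln(1+\gamma)$ (resp.\ $s=-\ln(1-\gamma)$), yielding the exponents $\gamma-(1+\gamma)\ln(1+\gamma)$ and $-\gamma-(1-\gamma)\ln(1-\gamma)$. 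The two scalar inequalities you defer are true and provable exactly as you suggest: for the lower tail, the series identity
\begin{equation*}
(1-\gamma)\ln(1-\gamma)+\gamma=\sum_{k\ge 2}\frac{\gamma^k}{k(k-1)}\ \ge\ \frac{\gamma^2}{2},\qquad 0<\gamma<1,
\end{equation*}
settles it term by term; for the upper tail, setting $f(\gamma)=(1+\gamma)\ln(1+\gamma)-\gamma-\frac{3\gamma^2}{6+2\gamma}$ one has $f(0)=0$ and, using $\ln(1+\gamma)\ge \frac{2\gamma}{2+\gamma}$,
\begin{equation*}
f'(\gamma)\ \ge\ \frac{2\gamma}{2+\gamma}-\frac{3\gamma(6+\gamma)}{2(3+\gamma)^2}
=\frac{\gamma^3}{2(2+\gamma)(3+\gamma)^2}\ \ge\ 0,
\end{equation*}
so $f\ge 0$ for all $\gamma>0$. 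One cosmetic point you handled correctly: the $t$ in the paper's statement should be read as $R$ (the number of summands), a typo in the Fact as printed.
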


\begin{restatable}{lemma}{condition} \label{lemma:condition}
	For any $\varepsilon > 0, \varepsilon_1 \in (0, \varepsilon/(1 - 1/2))$,
	for any $\delta_1, \delta_2 > 0$, 
	if $(a) \Pr_{\omega \in \Omega}\{\rho_S(A^*, \omega) \geq (1 - \varepsilon_1) \cdot OPT \} \geq 1 - \delta_1$;
	$(b)$ for every bad $A$ to $\varepsilon$, $\Pr_{\omega \in \Omega}\{\rho_S(A \omega) \geq  (1 - 1/2) (1 - \varepsilon_1) \cdot OPT \} \leq \delta_2 / (\binom{n^-}{q_N}\binom{m}{q_E})$;
	$(c)$ for all $\omega \in \Omega$, $\rho_S(A, \omega)$ is submodular to $S$,
	then, with at least $1 - \delta_1 - \delta_2$, the greedy output $A^o(\omega)$ of $\rho_S(\cdot, \omega)$ holds 
	\begin{equation*}
	\Pr_{\omega \in \Omega}\{\rho_S(A^o, \omega) \geq  (1/2 - \varepsilon) \cdot OPT \} \leq 1- \delta_1 - \delta_2, 
	\end{equation*}
\end{restatable}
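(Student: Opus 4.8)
The plan is to follow the IMM union-bound template \cite{tang15}, with the greedy guarantee replaced by the $1/2$-factor for submodular maximization under a partition matroid \cite{fisher1978analysis}. I read the conclusion as the statement that the greedy output $A^o(\omega)$ of the empirical objective $\rho_S(\cdot,\omega)$ satisfies $\Pr_\omega\{\rho_S(A^o)\ge(\tfrac12-\varepsilon)\OPT\}\ge 1-\delta_1-\delta_2$, where $\rho_S(A^o)$ denotes the true influence reduction. Call a feasible attack set $A$ (with $|A_N|\le q_N$ and $|A_E|\le q_E$) \emph{bad} if $\rho_S(A)<(\tfrac12-\varepsilon)\OPT$; the objective is to show $A^o$ is not bad with high probability. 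First I would isolate two failure events. Let $\cE_1=\{\rho_S(A^*,\omega)<(1-\varepsilon_1)\OPT\}$; by hypothesis (a), $\Pr[\cE_1]\le\delta_1$. Let $\cE_2$ be the event that some bad feasible set is over-estimated, i.e.\ $\rho_S(A,\omega)\ge\tfrac12(1-\varepsilon_1)\OPT$ for some bad $A$. Hypothesis (b) bounds the single-set over-estimation probability by $\delta_2/\big(\binom{n^-}{q_N}\binom{m}{q_E}\big)$, so a union bound over the at most $\binom{n^-}{q_N}\binom{m}{q_E}$ feasible sets gives $\Pr[\cE_2]\le\delta_2$. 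Hence the good event $\overline{\cE_1}\cap\overline{\cE_2}$ has probability at least $1-\delta_1-\delta_2$, and everything below is deterministic on it.

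On the good event I would lower-bound the empirical value of $A^o$. The empirical objective $\rho_S(\cdot,\omega)$ is a scaled coverage count over the sampled VRR paths, hence monotone, and is submodular by hypothesis (c); the feasibility family $(\cA,\{A:|A_N|\le q_N,|A_E|\le q_E\})$ is a partition matroid. The greedy bound of \cite{fisher1978analysis} then gives $\rho_S(A^o,\omega)\ge\tfrac12\max_{A\text{ feasible}}\rho_S(A,\omega)$. Because $A^*$ is feasible, $\max_A\rho_S(A,\omega)\ge\rho_S(A^*,\omega)$, and on $\overline{\cE_1}$ we have $\rho_S(A^*,\omega)\ge(1-\varepsilon_1)\OPT$; chaining these gives $\rho_S(A^o,\omega)\ge\tfrac12(1-\varepsilon_1)\OPT$.

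I would then close the loop by contradiction using the threshold gap. The hypothesis $\varepsilon_1\in(0,2\varepsilon)$ is exactly what makes $\tfrac12(1-\varepsilon_1)>\tfrac12-\varepsilon$, so the empirical lower bound $\tfrac12(1-\varepsilon_1)\OPT$ lies strictly above the badness threshold $(\tfrac12-\varepsilon)\OPT$. If $A^o$ were bad, it would be a bad feasible set with $\rho_S(A^o,\omega)\ge\tfrac12(1-\varepsilon_1)\OPT$, which is precisely event $\cE_2$ — contradicting that we are on $\overline{\cE_2}$. Hence $A^o$ is not bad, i.e.\ $\rho_S(A^o)\ge(\tfrac12-\varepsilon)\OPT$ on the good event, which yields the claimed probability $1-\delta_1-\delta_2$.

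The step I expect to need the most care is the transfer from the empirical guarantee to the true guarantee: the $1/2$-factor of \cite{fisher1978analysis} applies only to $\rho_S(\cdot,\omega)$, and it is the two-sided control—$A^*$ not under-estimated (event $\overline{\cE_1}$) together with no bad set over-estimated (event $\overline{\cE_2}$)—that upgrades it into a true-value guarantee, with the choice $\varepsilon_1<2\varepsilon$ supplying the slack that makes the two thresholds meet. The remaining technical points are verifying that the empirical coverage objective is genuinely monotone and submodular so that the matroid-greedy bound is applicable, and that the union bound ranges over exactly the $\binom{n^-}{q_N}\binom{m}{q_E}$ matroid-feasible sets.
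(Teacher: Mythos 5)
Your proposal is correct and follows essentially the same route as the paper's own proof: the matroid-greedy $1/2$-bound of \cite{fisher1978analysis} applied to the empirical objective, condition (a) to lower-bound $\rho_S(A^o,\omega)$ through $\rho_S(A^*,\omega)$, and condition (b) with a union bound over the $\binom{n^-}{q_N}\binom{m}{q_E}$ feasible sets to rule out over-estimated bad sets, yielding the $1-\delta_1-\delta_2$ guarantee. Your write-up is in fact tighter than the paper's (explicit failure events $\cE_1,\cE_2$, the contradiction step, and the correct reading of the conclusion as $\Pr\{\rho_S(A^o)\ge(\tfrac12-\varepsilon)\,{\rm OPT}\}\ge 1-\delta_1-\delta_2$, fixing the inverted inequality in the statement), but the underlying argument is identical.
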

\begin{proof}
	Because $\rho_S(A^o, \omega)$ is both monotone and submodular, then we have:
	\begin{equation*}
	\rho_S(A^o, \omega) \geq \frac{1}{2} \rho_S(A^*, \omega).
	\end{equation*}
	Based on the condition (a) in this lemma, we know it holds at least $1 - \delta_1$ probability,
	\begin{equation*}
	\rho_S(A^o, \omega) \geq \frac{1}{2} \rho_S(A^*, \omega) \geq \frac{1}{2}(1-\varepsilon_1)\cdot OPT.
	\end{equation*}
	Based on the condition (b) in this lemma, existing a bad $A$ causing $\rho_S(A, \omega) \geq \frac{1}{2}(1-\varepsilon_1)\cdot OPT$ with at most $\delta_2$ probability.
	Because of at most $\binom{n^-}{q_N}\binom{m}{q_E}$ sets with $q_N$ nodes and $q_E$ edges, for every bad $S$,
	the probability would not be more than $\delta_2 / (\binom{n^-}{q_N}\binom{m}{q_E})$.
	Then, based on the union bound, we know that, with at least $1 - \delta_1 - \delta_2$ probability, $\rho_S(A^o)$ is not bad, which means $	\rho_S(A^o, \omega) \geq (1/2 -\varepsilon) \cdot OPT$.
\end{proof}

	

\begin{restatable}{lemma}{nonbias} \label{lemma:nonbias}
	For any attack set $A$, the original greedy algorithm returns $\hat{\rho}_S(A)$ non-bias influence estimation $\rho_S(A)$.
	For any $0 < \delta < 1$ and $0 < \gamma < 1$, while $R \geq \frac{3n^-}{\gamma^2\rho_S(A)} \ln(\frac{2}{\delta})$, we have
	$\Pr\{|\hat{\rho}_S(A) - \rho_S(A)|\leq \gamma\cdot \rho_S(A)\} \geq 1 - \delta$.
	If $f_v(A)$ spends constant time, the time complexity is $O(n^-R)$.
\end{restatable}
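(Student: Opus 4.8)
The plan is to realize $\hat{\rho}_S(A)$ as a sample average over independently generated live-edge graphs (forward forests) and then apply a multiplicative Chernoff bound. First I would fix the per-sample random variable: for the $i$-th sampled live-edge graph $L_i$, set $Y_i = |\Gamma(L_i,S) \setminus \Gamma(L_i\setminus A, S)| = |\Gamma(L_i,S)| - |\Gamma(L_i\setminus A, S)|$, where the second equality uses that deleting nodes and edges can only shrink reachability, so $\Gamma(L_i\setminus A, S)\subseteq \Gamma(L_i,S)$. Since $S\cap A=\emptyset$, every seed remains reachable, so the set difference lies entirely in $V\setminus S$ and hence $Y_i\in\{0,1,\dots,n^-\}$. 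The estimator is $\hat{\rho}_S(A)=\frac{1}{R}\sum_{i=1}^R Y_i$, and unbiasedness $\E[\hat{\rho}_S(A)]=\rho_S(A)$ follows immediately because $\E[Y_i]=\rho_S(A)$ is exactly the live-edge-graph definition of influence reduction used in the proof of Lemma~\ref{lem:lemma1}.

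Next I would set up the concentration bound. To bring the variables into the range $[0,1]$ required by Fact~\ref{fact:Chernoff}, normalize $X_i = Y_i/n^- \in [0,1]$, so that $\hat{\rho}_S(A) = \frac{n^-}{R}\sum_i X_i$ and $\E[X_i]=\rho_S(A)/n^-$. The target event rescales cleanly: $\{|\hat{\rho}_S(A)-\rho_S(A)|>\gamma\,\rho_S(A)\}$ is identical to $\{|Y-\mu|>\gamma\mu\}$, where $Y=\sum_{i=1}^R X_i$ has mean $\mu=R\,\E[X_i]=R\,\rho_S(A)/n^-$. Applying the multiplicative Chernoff bound of Fact~\ref{fact:Chernoff} to $Y$ yields $\Pr\{|Y-\mu|>\gamma\mu\}\le 2\exp(-\gamma^2\mu/3)=2\exp\!\big(-\gamma^2 R\,\rho_S(A)/(3n^-)\big)$. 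Requiring the right-hand side to be at most $\delta$ and solving for $R$ gives exactly $R\ge \frac{3n^-}{\gamma^2\rho_S(A)}\ln(2/\delta)$, the stated threshold.

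For the running time, each of the $R$ samples requires generating one live-edge graph from $S$ and evaluating, for every node $v\in V\setminus S$, the indicator $f_v(A)=\I\{v\in\Gamma(L_i,S)\setminus\Gamma(L_i\setminus A,S)\}$ whose sum over $v$ is $Y_i$. Under the stated assumption that each $f_v(A)$ costs constant time, one sample costs $O(n^-)$, so the $R$ samples cost $O(n^- R)$ in total.

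The only real subtlety, and the step I would be most careful with, is the bookkeeping of the $n^-$ factor. The per-sample mean $\E[X_i]=\rho_S(A)/n^-$ is small, so the Chernoff exponent carries the factor $\rho_S(A)/n^-$, and it is precisely this that forces the sample count to scale as $n^-/\rho_S(A)$. Getting the normalization right so that the $n^-$ cancels correctly between the estimator $\hat{\rho}_S(A)=\frac{n^-}{R}\sum_i X_i$ and the Chernoff exponent is what makes the threshold come out as stated, rather than off by a factor of $n^-$; everything else is a routine application of the unbiased-estimator-plus-Chernoff template standard in the RIS literature.
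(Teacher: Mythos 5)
Your proposal is correct and follows essentially the same route as the paper's proof: define per-simulation variables $X_i \in [0,1]$ equal to the fraction of removed (no-longer-reached) nodes, so that $\E[X_i]=\rho_S(A)/n^-$ and $\hat{\rho}_S(A)=\frac{n^-}{R}\sum_i X_i$, establish unbiasedness by linearity, and apply the two-sided Chernoff bound of Fact~\ref{fact:Chernoff} to the sum to solve for the threshold on $R$. In fact your write-up is slightly more careful than the paper's: you keep the $n^-$ factor in the Chernoff exponent (the paper's displayed exponent drops it, a typo), you state the sample threshold with the correct sign, and you explicitly justify the $O(n^- R)$ running time, which the paper's proof leaves unaddressed.
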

\begin{proof}
    Let $X_i$ become the percentage of total removed nodes among all nodes in the $i$-th simulation. Then, we have (a) $X_i \in [0, 1]$; (b) $\E[X_i] = \rho_S(A) / n^-$; and (c) $\hat{\rho}_S(A) = \sum_{i=1}^R n^- X_i / R$. Let $X = \sum_{i=1}^R X_i= \hat{\rho}_S(A) \cdot R / n^-$. First,
	\begin{equation*}
	\E[\hat{\rho}_S(A)] = \E \left[\frac{\sum_{i=1}^R n^- X_i}{R}\right] = \frac{\sum_{i=1}^R n^- \E [X_i]}{R} = \rho_S(A).
	\end{equation*}
	Then, $\hat{\rho}_S(A)$ is non-bias estimation $\rho_S(A)$.
	Next, due to simulations being independent between each other, based on the Chernoff bound (Fact \ref{fact:Chernoff}), we have
	\begin{align*}
	& \Pr\{|\hat{\rho}_S(A) - \rho_S(A)| \leq \gamma \cdot \rho_S(A)\} \\
	&= \Pr\{|R \cdot \hat{\rho}_S(A) / n^- - R \cdot \rho_S(A) / n^-| \leq \gamma \cdot R \cdot \rho_S(A) / n^- \} \\
	&= \Pr\{|X - \E[X]| \leq \gamma \cdot \E[X]\} \\
	&= 1 - \Pr\{|X - \E[X]| > \gamma \cdot \E[X]\} \\
	&\geq 1 - 2 \exp(-\frac{\gamma^2\E[X]}{3}) = 1 - 2 \exp(-\frac{\gamma^2 R\cdot \rho_S(A)}{3})\\
	\end{align*}
	The last step is based on Chernoff bound. Finally, let $R \geq -\frac{3 n^-}{\gamma^2 \rho_S(A)} \ln(\frac{2}{\delta})$, we have $\Pr\{|\hat{\rho}_S(A) - \rho_S(A)|\leq \gamma\cdot \rho_S(A)\} \geq 1 - \delta$.
\end{proof}

\begin{restatable}{lemma}{numtheta} \label{lemma:numtheta}
	\cite{tang15}. For any $\varepsilon > 0$, $\varepsilon_1 \in (0, \varepsilon/(1-1/2))$, for any $\delta_1,\delta_2 > 0$, let
	\begin{align*}
	& \theta_1 = \frac{2n^- \cdot \ln(1/\delta_1)}{OPT \cdot \varepsilon_1^2},  \theta_2 = \frac{ n^- \cdot \ln(\binom{n}{q_N}\binom{m}{q_E}/\delta_2)}{OPT \cdot (\varepsilon - 1/2 \cdot \varepsilon_1)^2}.
	\end{align*}
	For any fixed $\theta \geq \theta_1$, we have 
	\begin{align*}
	    \Pr_{\omega \in \Omega}\{\rho_S(A^*, \omega) \geq  (1 - \varepsilon_1) \cdot OPT \} \geq 1- \delta_1.
	\end{align*}
	For any fixed $\theta \geq \theta_2$ and for every bad $A$ to $\varepsilon$, we have 
	\begin{align*}
	    \Pr_{\omega \in \Omega}\{\rho_S(A, \omega) \geq  (1 - 1/2)(1 - \varepsilon_1) \cdot OPT \} \leq \frac{\delta_2}{\binom{n}{q_N}\binom{m}{q_E}}.
	\end{align*}
\end{restatable}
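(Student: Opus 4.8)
The plan is to recognize this as the AdvIM adaptation of the standard IMM concentration argument of~\cite{tang15}, deriving both inequalities directly from the one-sided Chernoff bounds in Fact~\ref{fact:Chernoff2}. First I would pin down the estimator underlying $\rho_S(\cdot,\omega)$: a sample $\omega$ consists of $\theta$ independent roots drawn uniformly from $V\setminus S$ together with their RR paths, and for a fixed attack set $A$ I set $X_i=\I\{A\cap VE(P_i^S)\neq\emptyset\}$, the indicator that the $i$-th sampled path is valid and covered by $A$. By the first identity in Lemma~\ref{lem:VRRpathidentity}, each $X_i$ is a Bernoulli variable with $\E[X_i]=\mu_A:=\rho_S(A)/n^-$, and the empirical reduction is $\rho_S(A,\omega)=\frac{n^-}{\theta}\sum_{i=1}^\theta X_i$. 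Writing $Y=\sum_{i=1}^\theta X_i$ and $t=\theta$, every event on $\rho_S(A,\omega)$ becomes a one-sided deviation of $Y$ from its mean $t\mu_A$.

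For the first inequality I would apply the lower-tail bound of Fact~\ref{fact:Chernoff2} to $A^*$, where $\mu_{A^*}=\OPT/n^-$. The target event $\{\rho_S(A^*,\omega)\geq(1-\varepsilon_1)\OPT\}$ is exactly $\{Y\geq(1-\varepsilon_1)\,t\mu_{A^*}\}$, so taking $\gamma=\varepsilon_1$ bounds its complement by $\exp(-\tfrac{\varepsilon_1^2}{2}\,t\mu_{A^*})=\exp(-\tfrac{\varepsilon_1^2}{2}\cdot\tfrac{\theta\,\OPT}{n^-})$. Requiring this to be at most $\delta_1$ and solving for $\theta$ yields precisely $\theta\geq\theta_1$, giving the claimed lower bound on the probability.

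For the second inequality I would apply the upper-tail bound of Fact~\ref{fact:Chernoff2} to a bad set $A$, i.e.\ one with $\rho_S(A)<(\tfrac12-\varepsilon)\OPT$, hence $\mu_A<(\tfrac12-\varepsilon)\OPT/n^-$. The target event $\{\rho_S(A,\omega)\geq(1-\tfrac12)(1-\varepsilon_1)\OPT\}$ becomes $\{Y\geq a\}$ with $a=\tfrac12(1-\varepsilon_1)\,\theta\,\OPT/n^-$, and I would write $a=(1+\gamma)t\mu$ against the reference mean $\mu=(\tfrac12-\varepsilon)\OPT/n^-$ and $\gamma=\tfrac{\varepsilon-\varepsilon_1/2}{1/2-\varepsilon}$, so that $a-t\mu=(\varepsilon-\tfrac12\varepsilon_1)\,\theta\,\OPT/n^-$ is the genuine gap between the acceptance threshold and the largest mean a bad set can have. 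The bound then gives $\Pr\{Y\geq a\}\leq\exp\!\big(-\tfrac{\gamma^2}{2+\frac23\gamma}\,t\mu\big)$; substituting $\gamma$ and $t\mu$ collapses the exponent to $\tfrac{(\varepsilon-\varepsilon_1/2)^2}{2(1/2-\varepsilon)+\frac23(\varepsilon-\varepsilon_1/2)}\cdot\tfrac{\theta\,\OPT}{n^-}$, and the elementary identity $2(\tfrac12-\varepsilon)+\tfrac23(\varepsilon-\tfrac12\varepsilon_1)=1-\tfrac43\varepsilon-\tfrac13\varepsilon_1\leq1$ bounds the denominator by $1$. Demanding the result be at most $\delta_2/\big(\binom{n^-}{q_N}\binom{m}{q_E}\big)$ and solving for $\theta$ gives exactly $\theta\geq\theta_2$.

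Two points need care. First, because a bad set's true mean $\mu_A$ is unknown and only upper bounded by $(\tfrac12-\varepsilon)\OPT/n^-$, I cannot insert $\mu_A$ itself into Fact~\ref{fact:Chernoff2}; the fix is a stochastic-dominance (coupling) argument showing that the upper-tail probability $\Pr\{Y\geq a\}$ is nondecreasing in the common Bernoulli parameter, so replacing $\mu_A$ by the boundary value $(\tfrac12-\varepsilon)\OPT/n^-$ can only enlarge it---this is the step I expect to be the main obstacle to justify cleanly. Second, the per-set bound is converted into a guarantee over all bad sets by a union bound over the $\binom{n^-}{q_N}\binom{m}{q_E}$ candidate attack sets, which is exactly why that binomial factor appears inside the logarithm of $\theta_2$; this union bound is where Lemma~\ref{lemma:condition} consumes the per-set inequality.
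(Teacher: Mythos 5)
Your proof is correct and is essentially the argument the paper itself relies on: the paper offers no proof of Lemma~\ref{lemma:numtheta}, deferring entirely to the citation of \cite{tang15}, and your reconstruction---estimator $\rho_S(A,\omega)=\frac{n^-}{\theta}\sum_i \I\{A\cap VE(P_i^S)\neq\emptyset\}$ with mean $\rho_S(A)/n^-$ via Lemma~\ref{lem:VRRpathidentity}, the lower tail of Fact~\ref{fact:Chernoff2} with $\gamma=\varepsilon_1$ for $A^*$ yielding $\theta_1$, the upper tail with $\gamma=(\varepsilon-\varepsilon_1/2)/(1/2-\varepsilon)$ for bad sets together with the identity $2(1/2-\varepsilon)+\frac{2}{3}(\varepsilon-\varepsilon_1/2)=1-\frac{4}{3}\varepsilon-\frac{1}{3}\varepsilon_1\le 1$ yielding $\theta_2$, and the per-set bound left for Lemma~\ref{lemma:condition}'s union bound---is precisely the standard IMM derivation. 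The one subtlety you flag (a bad set's true mean $\mu_A$ is only upper-bounded by $(1/2-\varepsilon)\OPT/n^-$) is soundly resolved by your monotone coupling, and can be dispatched even more directly by noting that the exponent $(a-\theta\mu)^2/\bigl(2\theta\mu+\frac{2}{3}(a-\theta\mu)\bigr)$ obtained from Fact~\ref{fact:Chernoff2} at threshold $a$ is decreasing in $\mu$ on $\mu\le a/\theta$, so applying the bound at the true mean $\mu_A$ only strengthens the estimate computed at the boundary value.
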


\begin{proof}
     Let $X_i$ become the percentage of total removed nodes among all nodes in the $i$-th simulation. Based on the Chernoff bound (Fact \ref{fact:Chernoff2}), we have
     \begin{align*}
         & \Pr_{\omega \in \Omega} \{ \rho_S(A^*, \omega) < (1 - \varepsilon_1) \cdot OPT \} \\
         & = \Pr_{\omega \in \Omega} \{ n^- \cdot \sum_{i=1}^\theta X_i(A^*) / \theta < (1-\varepsilon_1) \cdot \rho_S(A^*) \} \\
         & = \Pr_{\omega \in \Omega} \{ \sum_{i=1}^\theta X_i(A^*) - \theta \cdot \rho_S(A^*)/n^- < -\varepsilon_1 \cdot (\theta \cdot \rho_S(A^*))/n^- \} \\
         & \leq exp(-\frac{\varepsilon_1^2}{2n^-}\cdot \theta \rho_S(A^*) ) \\ 
         & \leq exp(-\frac{\varepsilon_1^2}{2n^-}\cdot \frac{2n^- \cdot \ln(1/\delta_1)}{OPT \cdot \varepsilon_1^2} \cdot \rho_S(A^*) ) \\
         & = \delta_1.
     \end{align*}
     Therefore, 
     \begin{align*}
	    \Pr_{\omega \in \Omega}\{\rho_S(A^*, \omega) \geq  (1 - \varepsilon_1) \cdot OPT \} \geq 1- \delta_1.
	\end{align*}

    Let $\varepsilon_2 = \varepsilon - (1-1/2)\cdot \varepsilon_1$, then for every bad $A$ to $\varepsilon$ ({\em i.e.} $\rho_S(A) < (1/2-\varepsilon)\cdot OPT$, we have
    \begin{align*}
        & \Pr_{\omega \in \Omega}\{\rho_S(A, \omega) \geq  (1 - 1/2)(1 - \varepsilon_1) \cdot OPT \} \\
        & = \Pr_{\omega \in \Omega}\{ \sum_{i=1}^\theta X_i(A) - \theta\cdot \frac{\rho_S(A)}{n^-} \geq \frac{\theta}{n^-} \cdot \big( \frac{1}{2} (1-\varepsilon_1)\cdot OPT - \rho_S(A) \big)\} \\ 
        & \leq \Pr_{\omega \in \Omega}\{ \sum_{i=1}^\theta X_i(A) - \theta\cdot \frac{\rho_S(A)}{n^-} \geq \frac{\theta}{n^-}\cdot \varepsilon_2 \cdot OPT \} \\ 
        & = \Pr_{\omega \in \Omega} \{ \sum_{i=1}^\theta X_i(A) - \theta\cdot \frac{\rho_S(A)}{n^-} \geq (\varepsilon_2 \cdot \frac{OPT}{\rho_S(A)}) \cdot \theta \cdot \frac{\rho_S(A)}{n^-} \} \\
        & \leq exp\left( -\frac{(\varepsilon_2 \cdot \frac{OPT}{\rho_S(A)})^2}{2 + \frac{2}{3}(\varepsilon_2 \cdot \frac{OPT}{\rho_S(A)})} \cdot \theta \cdot \frac{\rho_S(A)}{n^-} \right) \\ 
        & \leq exp\left( -\frac{\varepsilon_2^2 \cdot OPT^2}{2\rho_S(A) + \frac{2}{3}(\varepsilon_2 \cdot OPT)} \cdot \frac{\theta}{n^-} \right) \\ 
        & \leq exp\left( -\frac{\varepsilon_2^2 \cdot OPT^2}{2(1-1/2-\varepsilon)\cdot OPT + \frac{2}{3}(\varepsilon_2 \cdot OPT)} \cdot \frac{\theta}{n^-} \right) \\ 
        & \leq exp\left( -\frac{(\varepsilon-(1-1/2)\cdot\varepsilon_1)^2 \cdot OPT}{2(1-1/2)} \cdot \frac{\theta}{n^-} \right) \\ 
        & \leq exp\left( -(\varepsilon-1/2\cdot\varepsilon_1)^2 \cdot OPT \cdot \frac{ n^- \cdot \ln(\binom{n}{q_N}\binom{m}{q_E}/\delta_2)}{OPT \cdot (\varepsilon - 1/2 \cdot \varepsilon_1)^2} \cdot \frac{1}{n^-} \right) \\
        & = \delta_2/(\binom{n}{q_N}\binom{m}{q_E}).
    \end{align*}
\end{proof}

}


\end{document}